\newtheorem{teo}{Theorem}[section]
\newtheorem{lemma}{Lemma}[section]
\newtheorem{defn}{Definition}[section]
\newtheorem{alg}{Algorithm}[section]
\newtheorem{prop}{Proposition}[section]
\newcommand{\Z}{\mathbb{Z}}
\newcommand{\I}{\mathcal{I}}
\newcommand{\R}{\mathbb{R}}
\newcommand{\D}{\mathcal{D}}
\newcommand{\La}{\mathcal{L}}
\newcommand{\Po}{\mathcal{P}}
\newcommand{\G}{\mathcal{G}}
\newcommand{\A}{\mathcal{A}}
\newcommand{\h}{\mathcal{H}}
\newcommand{\Or}{\mathcal{O}}
\newcommand{\Aut}{\operatorname{Aut}}
\newcommand{\e}{\mathrm{e}}
\title{Orbits of crystallographic embedding of non-crystallographic groups and applications to virology} 
\date{}
\author[1,2,3]{Reidun Twarock}
\author[4]{Motiejus Valiunas}
\author[1,3]{Emilio Zappa \thanks{ez537@york.ac.uk}}
\affil[1]{Department of Mathematics,}
\affil[2]{Department of Biology,}
\affil[3]{York Center for Complex Systems Analysis, University of York}
\affil[4]{Faculty of Mathematics, University of Cambridge, UK} 
\begin{document}
\maketitle

\begin{abstract}
The architecture of infinite structures with non-crystallographic symmetries can be modeled via aperiodic tilings, but a systematic construction method for finite structures with non-crystallographic symmetry at different radial levels is still lacking. We present here a group theoretical method for the construction of finite nested point set with non-crystallographic symmetry.  Akin to the construction of quasicrystals, we embed a non-crystallographic group $G$ into the point group $\Po$ of a higher dimensional lattice and construct the chains of all $G$-containing subgroups. We determine the orbits of lattice points under such subgroups, and show that their projection into a lower dimensional $G$-invariant subspace consists of nested point sets with $G$-symmetry at each radial level. The number of different radial levels is bounded by the index of $G$ in the subgroup of $\Po$. In the case of icosahedral symmetry, we determine all subgroup chains explicitly and illustrate that these point sets in projection provide blueprints that approximate the organisation of simple viral capsids, encoding information on the structural organisation of capsid proteins and the genomic material collectively, based on two case studies. Contrary to the affine extensions previously introduced, these orbits endow virus architecture with an underlying finite group structure, which lends itself better for the modelling of its dynamic properties than its infinite dimensional counterpart.   
\end{abstract}

\section{Introduction}

Non-crystallographic symmetries are ubiquitous in physics, chemistry and biology. Prominent examples are quasicrystals, alloys exhibiting five-, eight-, ten- and twelve-fold symmetry with long-range order in their atomic organisation \cite{steurer} and, in carbon chemistry,  icosahedral  carbon cage structures called fullerenes  \cite{kroto1}, with architectures akin to Buckminster Fuller's geodesic domes. Icosahedral symmetry also plays a fundamental role in virology. Viruses encapsulate and hence protect their genomic material inside a protein shell, called capsid, that in the vast majority of cases possesses icosahedral symmetry.  In 1962, Caspar and Klug proposed, in their seminal paper \cite{caspar}, a theory to describe the geometry of icosahedral viral capsids and predict the locations and orientations of the capsid proteins. Inspired by the structure of the geodesic dome, they derived a series of icosahedral triangulations, called deltahedra. More recently, \cite{reidun} proposed a generalisation of this theory by considering more general tilings of the capsid surface inspired by the theory of quasicrystals. 

Caspar-Klug theory and generalisations thereof descibe the capsid of a virus as a two-dimensional object rather than in three-dimensional space. Therefore, they do not provide information about other important features of the capsid, such as its thickness and the organisation of the genomic material encapsulated inside. Experiments showed that many viruses exhibit icosahedral symmetry at different radial levels: examples are the dodecahedral cage of RNA observed in Pariacoto Virus \cite{tang} and the double-shell structure of the genomic material of Bacteriophage MS2 \cite{toropova}. These results suggest that the symmetry of the virus should be extended to include information on the capsid proteins and the packaged genome collectively.

A first step towards this goal was the principle of affine extensions, described in a series of papers \cite{keef, tom, jess}. In this work, the generators of the icosahedral group have been extended by a non-compact generator acting as a translation, with the additional requirement that the resulting words of the group satisfy non-trivial relations. Such affine extension can also be obtained via a construction similar to affine extensions in the theory of Kac-Moody algebras \cite{carter}. In this case, icosahedral symmetry is extended via an extension of the Cartan matrix, resulting in the addition of a non-compact operator to the generators of the icosahedral group \cite{patera1, dechant2, dechant3}. The orbits of the affine extensions thus constructed consist of infinite sets of points that densely fill the space, since the icosahedral group is non-crystallographic in 3D. Since viral capsids are finite objects, a cut-off parameter must be introduced, that limits the number of monomials of the affine group. In previous work, words characterised by a finite action of the translation operator had been used to construct multi-shell structures, in which each radial level displays icosahedral symmetry. However, such a cut-off implies that the point sets are not invariant under the extended group structure, which limits the use of these concepts in the formulation of energy functions, e.g. Hamiltonians or in the context of Ginzburg-Landau theory.

An alternative to this approach based on affine extension is Janner's work, that models viral architecture in terms of lattices. In a series of papers \cite{janner, janner1, janner2, janner3, janner4}, Janner embedded virus structure into lattices, and showed that this provided approximation for virus architecture, and provides an alternative approach for the modeling of the onion-like fullerenes \cite{jannerB}, including a paper combining this lattice approach with the affine extensions mentioned above \cite{jannerC}. Subsequently, approximations of virus architecture in terms of quasilattices were developed \cite{david}, which provide an alternative to the lattice approach by Janner, and by construction have vertex sets that contain the point arrays determined by the affine extended groups as subsets.  All of these approaches approximate viruses in terms of infinite structures, lattices, quasilattices, infinite groups, which require a cut-off. In the case of the lattices and quasilattices, the cut-off consists of choosing a subset of infinite (quasi)lattice, and in the case of affine extensions the action of the translation operator has to be limited. This motivates the study of the present paper in which we develop an approach in terms of mathematical concepts that are intrinsically finite-dimensional, because they are related to orbits of finite groups.    

In this paper we introduce a new group theoretical method to study nested point sets with non-crystallographic symmetries, based on the embedding of non-crystallographic groups into higher dimensional lattices \cite{senechal}. This embedding is a standard way to define mathematically quasicrystals, e.g. via the cut-and-project schemes and model sets \cite{moody}, or the superspace approach \cite{superspace}. More generally, in order to model objects in 3D that possess a non-crystallographic symmetry at different radial levels, it makes sense to embed the non-crystallographic symmetry into a crystallographic setting and use the long-range order implied to induce in projection information on the collective arrangements of different radial levels. \cite{jannerA} gave a first approach in this direction, by analysing double-shell structures with five-fold symmetry as projected orbits of specific point groups in higher dimensions.  Here we present a more systematic study for general non-crystallographic symmetries.  Specifically, we embed a non-crystallographic group $G$ into the point group $\Po$ of a lattice in the minimal higher dimension where the cut-and-project construction is possible. Since this embedding is not, in general, maximal, we consider the subgroups $K$ of $\Po$ containing $G$ as a subgroup, which extend the symmetry described by $G$. We prove that the projection of the orbits of lattice points under such subgroups into a lower dimensional subspace invariant under $G$ is a nested finite point set with non-crystallographic symmetry $G$. We show that the number of distinct radial levels in the projected orbits is bounded by the index of $G$ in $K$. 

As an illustration of this approach, we provide analytically an explicit construction of planar nested structures for non-crystallographic dihedral groups. Moreover, in order to pave the way for applications to icosahedral viruses, we apply this approach to the icosahedral group $\I$, which can be embedded into the point group of the simple cubic lattice in six dimensions \cite{zappa}. We classify all the $\I$-containing subgroups of the hyperoctahedral group, with the aid of the software \cite{GAP}, which is designed for problems in computational group theory. Since the 6D lattice is infinite, a cut-off parameter must be introduced in order to select a finite number of lattice points whose orbits can then be used to model the capsid. By construction, all point arrays have full icosahedral symmetry, i.e. containing reflections as well as rotations. Since viruses are known to be chiral, this may seem perplexing; however, we note that point arrays do not fully constrain viral architecture, thus proteins can be positioned in the capsid so as to break the full symmetry, as long as they adhere overall to the blueprint indicated by the points. Therefore, it is not possible to obtain a full classification of the orbits as was done by  \cite{jess}. However, these results provide for the first time a finite group structure, albeit in a higher dimensional space, underlying the geometry of the multiple layers of material in a virus. This has important consequences for the modelling of physical properties; specifically, conformational changes of viral capsids, which are important for the virus to become infective, can be modeled in the framework of the Ginzburg-Landau theory of phase transitions \cite{zappa2}, via the formulation of an energy function invariant under the generators of the symmetry group of the capsid.  

The paper is organised as follows. After reviewing, in Section \ref{embedding}, the embedding of non-crystallographic groups into higher dimensional lattices, in Section \ref{orbits} we describe the new group theoretical setup to model finite nested structures with non-crystallographic symmetry.  As a first application, we study in Section \ref{dihedral_groups} planar nested point sets obtained from projection of extensions of embedded non-crystallographic dihedral groups. In Section \ref{ico_section} we analyse in detail the case of icosahedral symmetry, classifying the chain of subgroups containing the icosahedral group embedded into the six-dimensional hyperoctahedral group. Finally, in Section \ref{virus} we use these results to obtain geometric constraints on viral capsid architecture, and present two case studies, namely the capsids of Pariacoto Virus and Bacteriophage MS2, whose structures have been intensively studied experimentally. We conclude in Section \ref{conclusion} by discussing further applications of these results.

\section{Crystallographic embedding of non-crystallographic groups}\label{embedding}

Our new group theoretical setup relies on the embedding of non-crystallographic symmetries into the point group of higher dimensional lattices. This is a standard method in the theory of quasicrystals; here we briefly review it and fix the notation that we are going to use throughout the paper. We refer to \cite{senechal} and \cite{grimm} for further information. 

The point group $\Po$ of a lattice $\La$ in $\R^d$ with generator matrix $B$ is the maximal set of orthogonal transformations that leave the lattice invariant:
\begin{equation}\label{point_group}
\Po(\La) := \{ Q \in O(d) : \exists M \in GL(d,\R) : QB = BM \}.
\end{equation}

$\Po$ is a finite group and does not depend on the matrix $B$ \cite{senechal}. The lattice group $\Lambda$ constitutes an integral representation of the point group $\Po$ with respect to $B$:
\begin{equation}\label{lattice_group}
\Lambda(B) := \{ M \in GL(d,\Z) : \exists Q \in \Po(\La) : B^{-1}QB = M \}.
\end{equation}

A finite group of isometries $G$ is non-crystallographic in dimension $k$ if it does not leave any lattice invariant in $\R^k$. Following \cite{levitov}, we introduce the following:
\begin{defn}\label{cryst_def}
Let $G \subseteq O(k)$ be a finite non-crystallographic group of isometries. A \emph{crystallographic representation} of $G$ is a matrix group $\widetilde{G}$ satisfying the following conditions:
\begin{enumerate}
\item[\emph{($\mathbf{C1}$)}] $\widetilde{G}$ stabilises a lattice $\La$ in $\R^d$, with $d > k$, i.e. $\widetilde{G}$ is a subgroup of the point group $\Po$ of $\La$;
\item[\emph{($\mathbf{C2}$)}] $\widetilde{G}$ is reducible in $GL(d,\R)$ and contains an irreducible representation $\rho_k$ of $G$ of degree $k$, i.e. 
\begin{equation}\label{cryst}  
\widetilde{G} \simeq \rho_k \oplus \rho', \qquad \text{deg}(\rho') = d-k.
\end{equation}
\end{enumerate} 
\end{defn}

The condition ($\mathbf{C1}$) implies that the matrices representing the elements of $\widetilde{G}$ with respect to a generator matrix $B$ of the lattice are integral or, equivalently, $B^{-1}\widetilde{G}B$ is a subgroup of the lattice group $\Lambda \subseteq GL(d,\Z)$ of $\La$  (cf. \eqref{lattice_group}).  As a consequence, the character $\chi_{\widetilde{G}}$ is an integer-valued function. The condition ($\mathbf{C2}$) is necessary for the construction of quasicrystals in $\R^k$ via the cut-and-project method (\cite{senechal}, \cite{grimm}).  

The minimal dimension $d > k$ for which a crystallographic representation $\widetilde{G}$ of $G$ is possible is called the \emph{minimal crystallographic dimension} of $G$. The conditions $\chi_{\widetilde{G}} \in \Z$ and \eqref{cryst} can be easily verified with the aid of the character table of $G$ and Maschke's Theorem \cite{fulton} . The existence, and possibly an explicit construction, of lattices in $\R^d$ whose point group contains a crystallographic representation of $G$ is a more difficult task. In the case of icosahedral symmetry, the minimal crystallographic dimension is six and the lattices in $\R^6$ have been classified in \cite{levitov} (this is explained in more detail in Section \ref{ico_section}). For planar non-crystallographic symmetries described by the dihedral groups $\D_{2n}$, the minimal crystallographic dimension is $\varphi(n)$, the Euler function of $n$. We will go back to this example in Section \ref{dihedral_groups}. 

Let us denote by $V^{(k)}$ the invariant subspace of $\R^d$ which carries the irrep $\rho_k$ of $G$. Let $\pi : \R^d \rightarrow V^{(k)}$ be the projection into $V^{(k)}$, i.e. the linear operator such that the diagramme
\begin{equation}\label{diagramma}
\begin{CD}
\R^d @>\widetilde{G}(g)>> \R^d\\
@VV\pi V @VV\pi V\\
V^{(k)} @> \rho_k(g) >> V^{(k)}
\end{CD}
\end{equation}
commutes for all $g \in G$:
\begin{equation}\label{comm}
\pi (\widetilde{G}(g) v) = \rho_k(g) (\pi (v)), \qquad \forall g \in G, \; \forall v \in \R^d.
\end{equation}

Let $V^{(d-k)}$ denote the orthogonal complement of $V^{(k)}$ in $\R^d$. We recall the following Proposition (for the proof, see \cite{senechal}):
\begin{prop}\label{one_to_one}
The following are equivalent:
\begin{enumerate}
\item  $V^{(d-k)}$ is \emph{totally irrational}, i.e. $V^{(d-k)} \cap \La = \{ 0 \}$;
\item $\pi \mid_{\La}$ is one to one. 
\end{enumerate}
\end{prop}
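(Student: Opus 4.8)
The plan is to recognise this proposition as a special case of the elementary fact that an additive map between abelian groups is injective precisely when its kernel is trivial; the only substantive point is to identify the kernel of $\pi$ correctly. First I would observe that $\pi$ is the linear projection of $\R^d = V^{(k)} \oplus V^{(d-k)}$ onto the first summand $V^{(k)}$ along $V^{(d-k)}$, so that as an $\R$-linear map its kernel is exactly $\ker \pi = V^{(d-k)}$. This is the one step that genuinely uses the geometric setup: the decomposition $\R^d = V^{(k)} \oplus V^{(d-k)}$ is orthogonal (since $V^{(d-k)}$ is defined as the orthogonal complement of $V^{(k)}$) and $G$-invariant, so the orthogonal projection onto $V^{(k)}$ is automatically $G$-equivariant and hence is the map $\pi$ fixed by the commuting diagram \eqref{diagramma}, with kernel precisely $V^{(d-k)}$.

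With this in hand, I would note that $\La$ is a subgroup of $(\R^d, +)$ and that $\pi$ is in particular additive, so its restriction $\pi\mid_{\La} : \La \to V^{(k)}$ is a homomorphism of abelian groups whose kernel is $\ker(\pi\mid_{\La}) = \La \cap \ker \pi = \La \cap V^{(d-k)}$. Both implications then follow by unwinding this identity. For $(1) \Rightarrow (2)$, I would assume $V^{(d-k)} \cap \La = \{0\}$ and take $u, v \in \La$ with $\pi(u) = \pi(v)$; then $u - v \in \La$ and $\pi(u-v) = 0$, so $u - v \in \La \cap V^{(d-k)} = \{0\}$, forcing $u = v$ and hence injectivity. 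For the converse $(2) \Rightarrow (1)$, I would assume $\pi\mid_{\La}$ is injective and take $w \in V^{(d-k)} \cap \La$; then $\pi(w) = 0 = \pi(0)$ with $w, 0 \in \La$, so injectivity gives $w = 0$, whence the intersection is trivial.

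There is no real obstacle in this argument: the statement is essentially a repackaging of the kernel-triviality criterion for injectivity, and the proof reduces to a two-line kernel computation once $\ker \pi = V^{(d-k)}$ is recorded. The only place demanding any care at all is confirming that the projection fixed by the diagram is the projection \emph{along the orthogonal complement} $V^{(d-k)}$ (rather than along some other complement of $V^{(k)}$), since it is this fact that pins down $\ker \pi$ as exactly $V^{(d-k)}$. As noted above, this is immediate from the orthogonality and $G$-invariance of the decomposition, so I would expect the entire proof to occupy only a few lines.
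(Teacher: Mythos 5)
Your proof is correct, and it is exactly the standard argument: the paper itself does not prove this proposition but defers to the cited reference (Senechal), where the proof is the same kernel-triviality computation you give. Your identification of $\ker\pi = V^{(d-k)}$ (using that $V^{(d-k)}$ is the orthogonal complement, so $\pi$ is the projection along it) followed by the observation that $\ker(\pi\mid_{\La}) = \La \cap V^{(d-k)}$ settles both implications, so nothing is missing.
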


The triple $(V^{(k)}, V^{(d-k)}, \La)$ is the starting point to define model sets via cut-and-project schemes with $G$-symmetry \cite{moody}, which is a standard way to define quasicrystals mathematically. In this paper, we construct finite point sets resulting from the projection into $V^{(k)}$ of orbits of points of $\La$ under $G$-containing subgroups of the point group $\Po$. We explain this construction in the next section.  

\section{Nested point sets obtained from projection}\label{orbits}

The embedding of a non-crystallographic group $G$ into a higher dimensional lattice $\La$ is, in general, not maximal. This means that there exist proper subgroups of the point group $\Po$ of $\La$ which contain a crystallographic representation $\widetilde{G}$ of $G$ as a subgroup. Therefore, we introduce the set:
\begin{equation}\label{subgroups}
\A_{\widetilde{G}} := \{ K \leq \Po : \widetilde{G} \leq K \},
\end{equation}
which consists of all the $\widetilde{G}$-containing subgroups of $\Po$. For computational purposes, we fix the generator matrix $B$ of $\La$, and consider the subgroup structure of the lattice group $\Lambda$ in that representation, i.e. the set $\A_{\h}(B) = \{ K < \Lambda : \h < K \}$, with $\h := B^{-1} \widetilde{G} B$. Notice that a different choice in the generator matrix of the lattice results in subgroups $K'$ conjugate to $K$ in $GL(d, \Z)$.   

The elements in $\A_{\h}$ encode the symmetry described by $G$ plus additional generators that extend this symmetry. Let $K$ be an element of $\A_{\h}$, and let $n := [K : \h]$ be the index of $\h$ in $K$. Let $T = \{g_1, \ldots, g_n \}$ be a transversal of $\h$ in $K$, i.e. a system of representatives in $K$ of the right cosets of $\h$ in $K$ \cite{holt}. Let $v \in \La$ be a lattice point, which can be written as $v = (m_1, \ldots, m_d)$, with $m_i \in \Z$ (since we fixed the basis $\mathcal{B}$). $v$ can be taken as seed point for the orbit $\mathcal{O}_K(v) = \{ kv : k \in K \}$ under $K$. With this setup, we prove the following theorem.

\begin{teo}\label{prop1}
Let $\Or_i(v) \equiv \Or_{\h g_i}(v) = \{ h g_i v : h \in \h \}$ be the orbit of $v \in \La$ with respect to the coset $\h g_i$, and let us denote by $P_i(v) := \pi(\Or_i(v))$ the orbit projected into $V^{(k)}$, the subspace of dimension $k$ carrying the irrep $\rho_k$ of $G$ (cf. \eqref{diagramma}). Then we have:
\begin{enumerate}
\item $P_i(v)$ is well-defined, i.e. does not depend on the choice of the transversal $T$; 
\item $P_i(v)$ retains the symmetry described by $G$;
\item $P_i(v) = P_j(v)$ if and only if 
\begin{equation}\label{condition}
g_j^{-1} \h g_i \cap \text{Stab}_K(v) \neq \varnothing .
\end{equation}
\item If $\h$ is normal in $K$, then all $P_i(v)$ have the same cardinality. 
\end{enumerate}
\end{teo}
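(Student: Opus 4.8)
The plan is to push the whole statement down to the lattice $\La$, where the projected sets become genuine group orbits. The enabling fact I would use is Proposition~\ref{one_to_one}: in the cut-and-project setup $V^{(d-k)}$ is totally irrational, so $\pi\mid_{\La}$ is injective. Since the elements of $K\subseteq\Lambda\subseteq GL(d,\Z)$ send integer-coordinate vectors to integer-coordinate vectors, every $hg_iv$ lies in $\La$, and $\pi$ therefore restricts to a bijection $\Or_i(v)\to P_i(v)$. This reduces (1), (3) and (4) to corresponding statements about the sets $\Or_i(v)\subseteq\La$: namely $P_i(v)=P_j(v)\iff\Or_i(v)=\Or_j(v)$ and $|P_i(v)|=|\Or_i(v)|$. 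The second idea I would exploit throughout is that $\Or_i(v)=\h g_i v$ is precisely the $\h$-orbit of the single point $g_i v$, which lets me bring in the orbit--stabilizer theorem and the fact that orbits are either equal or disjoint.

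For (1) I would change the representative, writing $g_i'=h_0 g_i$ with $h_0\in\h$; then $\{hg_i'v:h\in\h\}=\{(hh_0)g_iv:h\in\h\}=\Or_i(v)$ because right multiplication by $h_0$ is a bijection of $\h$, so $\Or_i(v)$ and hence $P_i(v)$ depend only on the coset. For (2) I would use that $\Or_i(v)$ is stable under left multiplication by $\h$, and then intertwine through the commuting diagram \eqref{comm}: writing $h_g\in\h$ for the lattice representative of $\widetilde{G}(g)$, one gets $\rho_k(g)\pi(hg_iv)=\pi(h_g h\,g_iv)\in P_i(v)$, and invertibility of $\rho_k(g)$ upgrades this inclusion to the equality $\rho_k(g)P_i(v)=P_i(v)$, i.e. $P_i(v)$ carries the $G$-action.

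For (3) I would note that two $\h$-orbits coincide exactly when they intersect, so $\Or_i(v)=\Or_j(v)$ iff $g_iv=hg_jv$ for some $h\in\h$; rearranging to $g_j^{-1}h^{-1}g_i\in\text{Stab}_K(v)$ and letting $h$ range over $\h$ recovers condition \eqref{condition} verbatim. For (4) I would apply orbit--stabilizer, $|\Or_i(v)|=|\h|/|\text{Stab}_\h(g_iv)|$, together with the identification $\text{Stab}_\h(g_iv)=\h\cap g_i\,\text{Stab}_K(v)\,g_i^{-1}$, which comes from $hg_iv=g_iv\iff g_i^{-1}hg_i\in\text{Stab}_K(v)$; when $\h\trianglelefteq K$, conjugation by $g_i$ is an automorphism of $K$ preserving $\h$, so it carries $\h\cap\text{Stab}_K(v)$ bijectively onto $\h\cap g_i\,\text{Stab}_K(v)\,g_i^{-1}$, forcing $|\text{Stab}_\h(g_iv)|$, and thus $|P_i(v)|=|\h|/|\h\cap\text{Stab}_K(v)|$, to be independent of $i$.

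I expect the main obstacle to be conceptual rather than computational: one must be sure the injectivity hypothesis of Proposition~\ref{one_to_one} is genuinely available, since the entire descent to $\La$ — and in particular the cardinality claim in (4) — fails without it. The only other delicate point is the conjugation-invariance step in (4), which is the sole place where normality of $\h$ is actually used and where an inverse or a left/right coset slip would silently break the bijection.
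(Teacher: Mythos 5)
Your proposal is correct, and for parts (1)--(3) it is essentially the paper's own argument: the same transversal-change computation for (1), the same use of the intertwining relation \eqref{comm} for (2) (the paper identifies $P_i(v)$ outright as the $\rho_k$-orbit of $\pi(g_i v)$, whereas you prove invariance under each $\rho_k(g)$ and upgrade by invertibility --- equivalent here), and the same reduction via injectivity of $\pi\mid_{\La}$ for (3), where your ``orbits coincide iff they intersect'' phrasing cleanly reproduces condition \eqref{condition}. Where you genuinely diverge is part (4), and your route is both shorter and more transparent. The paper instead makes the quotient $K/\h$ act by left translation on the set $X$ of coset-orbits, shows this action is transitive, applies orbit--stabiliser to \emph{that} action to partition $K$ into unions of cosets $K^{(i)}$ of equal size, and then counts cosets of $\text{Stab}_K(v)$ inside each $K^{(i)}$ to conclude equicardinality; the well-definedness of that induced action on $X$ is itself a subtle point resting on normality. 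You apply orbit--stabiliser directly to the $\h$-action on the point $g_i v$, identify $\text{Stab}_{\h}(g_i v) = \h \cap g_i \text{Stab}_K(v) g_i^{-1}$, and use normality exactly once --- conjugation by $g_i$ preserves $\h$, hence carries $\h \cap \text{Stab}_K(v)$ bijectively onto $\text{Stab}_{\h}(g_i v)$ --- which yields the explicit common value $|P_i(v)| = |\h| / |\h \cap \text{Stab}_K(v)|$, a formula the paper's proof never states. What the paper's heavier construction buys in exchange is structural information your argument does not produce: the distinct orbits are permuted transitively by $K/\h$ and are in bijection with cosets of a stabiliser subgroup, so their number $r$ divides the index $n$; this feeds the paper's later discussion of the number of radial levels. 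Finally, the injectivity caveat you flag is real but shared by the paper: both proofs silently require $V^{(d-k)}$ to be totally irrational so that Proposition \ref{one_to_one} applies, and both need the observation (which you state explicitly) that $K$ maps $\La$ into $\La$, so the orbit actually lies in the domain where $\pi$ is one-to-one.
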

\begin{proof}
\begin{enumerate}
\item Let $T' = (g_1', \ldots, g_n')$ be another transversal for $\h$ in $K$. This implies that there exist $\hat{h}_i \in \h$, for $i =1, \ldots, n$, such that $g_i' = \hat{h}_i g_i$. We have
\begin{equation*}
\Or'_i(v) = \Or_{\h g_i'}(v) = \{ h g_i' : h \in \h \} = \{ h \hat{h}_i g_i v : h \in \h \} = \Or_i(v), \quad i=1, \ldots, n,
\end{equation*} 
and the result follows.

\item It follows from the commutative property in \eqref{comm}; in particular, we have
\begin{equation*}
\pi(\Or_i(v)) = \{ \pi(hg_iv) : h \in \h \} = \{ \rho_k(h) \pi(g_i v) : h \in \h \} = \{ \hat{h} \pi(g_iv) : \hat{h} \in \rho_k \} =: \hat{\Or}_i(v), 
\end{equation*}
for $ i =1, \ldots, n$. The orbit $\hat{\Or}_i(v)$ has $G$-symmetry by construction. 

\item We have
\begin{align*}
& P_i(v) = P_j(v) \iff \pi(\Or_i(v)) = \pi(\Or_j(v)) \iff \Or_i(v) \underset{\text{(since $\pi$ is $1-1$)}}{=} \Or_j(v) \;  \\
& \iff \{ h g_i v ; h \in \h \} = \{ hg_j v : h \in \h \}  \iff \exists h, k \in \h : hg_i v = kg_j v \\
& \iff g_j^{-1} k^{-1} h g_i v = v \iff g_j^{-1}k^{-1} h g_i \in \text{Stab}_K(v), 
\end{align*}
which proves the statement.

\item Since $\h$ is normal in $K$, the cosets $\h g_i$ form the quotient group $K / \h$ of size $n$. Let $X := \{ \Or_i : i =1, \ldots, n \}$ be the set of all the orbits with respect to the cosets $\h g_i$. In the following, we will omit the dependence on $v$ for ease of notation. We can define an action of $K / \h$ on $X$ as $\h_i \cdot \Or_{\h_j} := \Or_{\h_i \h_j}$. This action is well defined since $K / \h$ is a group, and it is transitive since, for every element $\Or_{\h_i} \in X$, we have $\h_j \cdot \Or_{\h^{-1}_j \h_i} = \Or_{\h_i}$. Let $S_{\h} := \text{Stab}_{K / \h}(\Or_{\h})$ denote the stabiliser of $\Or_{\h}$ under this action.  With $s := |S_\h|$ we thus have by the orbit-stabiliser theorem
\begin{equation*}
r:=|X| = \frac{|K / \h|}{|S_{\h}|} = \frac{n}{s}. 
\end{equation*}

It follows that the sets $O_i(v)$ are in bijection with the left cosets of $S_{\h}$ in $K / \h$. We denote these cosets by $A_i$, for $i=1, \ldots, r$. These form a partition of the quotient group $K / \h$, which we write as
\begin{equation*}
\underbrace{\h_1^{(1)}, \ldots, \h_s^{(1)}}_{A_1}, \ldots, \underbrace{\h_1^{(i)}, \ldots, \h_s^{(i)}}_{A_i}, \ldots, \underbrace{\h_1^{(r)}, \ldots, \h_s^{(r)}}_{A_r}\,.
\end{equation*}

By construction, $\Or_{\h^{(i)}_j} = \Or_{\h^{(i)}_k}$, for $j,k = 1,\ldots, s$. Let us define the sets
\begin{equation}\label{Ki}
K^{(i)} := \bigcup_{j=1}^s \h_j^{(i)} \subseteq K, \qquad i=1, \ldots, r.
\end{equation}     

The set $\{ K^{(i)} : i=1, \ldots, r \}$ constitutes a partition of $K$, since it is the union of cosets. Moreover they all have the same order:
\begin{equation}\label{order}
|K^{(i)}| = s \cdot |\h| =: N, \qquad \forall i =1, \ldots, r.
\end{equation}

Let $S = (\h^{(1)}_1, \ldots, \h_1^{(r)})$ be a transversal for the coset of $S_{\h}$ in $K/\h$. It follows from \eqref{Ki} that $K^{(i)} = \{ k \in K : k v \in \Or_{\h_1^{(i)}} \}$, therefore 
\begin{equation*}
\Or_{K^{(i)}} = \{ k v : k \in K^{(i)} \} = \Or_{\h^{(i)}_1}\,. 
\end{equation*}

To conclude, we observe that each $K^{(i)}$ contains complete cosets of $K / \text{Stab}_K(v)$. In fact, let $k \text{Stab}_K(v)$ be a coset in $K / \text{Stab}_K(v)$. If $k \in K^{(i)}$, then an element in $k \text{Stab}_K(v)$ is of the form $k \hat{k}$, with $\hat{k} \in \text{Stab}_K(v)$, and belongs to $K^{(i)}$ since $(k \hat{k}) v = k (\hat{k}v) = k v \in \Or_{\h_1^{(i)}}$. Therefore, each $K^{(i)}$ is partitioned into $|K^{(i)}|/|\text{Stab}_K(v)|$ sets: each of these sets corresponds to a distinct point in the orbit $\Or_{\h_1^{(i)}}$. Since $|K^{(i)}| = N$ for all $i$ due to \eqref{order}, each orbit $\Or_{\h_1^{(i)}}$ has the same number of points, and hence also each $P_i(v)$, because the projection is one-to-one. 

\end{enumerate}
\end{proof}

The decomposition of $K \in \mathcal{A}_{\h}$ into cosets with respect to $\h$ induces a well-defined decomposition of the projected orbit $\pi(\Or_K(v))$ (cf. \eqref{diagramma}):
\begin{equation}\label{decomposition}
\pi(\Or_K(v)) = \bigcup_{i=1}^n \pi(\Or_i(v)) = \bigcup_{i=1}^n \Or_{\rho_k}(\pi(g_iv)).
\end{equation}
 
The point set defined by \eqref{decomposition} consists of points situated at different radial levels, since, in general, $|\pi(g_iv)| \neq |\pi(g_jv)|$ for $i \neq j$, where $|\cdot|$ denotes the standard Euclidean norm in $\R^k$ . Hence the projected orbit is an onion-like structure, with each layer being the union of the projection of orbits corresponding to different cosets. It follows that the number $r$ of distinct radial levels is bounded by the index of $\h$ in $K$. 

Using these results, we can set up a procedure to extend the non-crystallographic symmetry described by $\rho_k$ in $V^{(k)}$. In particular, let $x \in V^{(k)}$ be a seed point for the orbit of $\rho_k$. The pre-image $v = \pi^{-1}(x)$ is a point of the lattice $\La$ by construction. Let $K$ be an element of $\mathcal{A}_{\h}$. The projection of $\Or_K(v)$ contains the orbit $\Or_{\rho_k}(x)$, which corresponds to the coset $\h$ (compare with \eqref{decomposition}), and possibly more layers with $G$-symmetry.  This procedure can be iterated; let us consider the chain of subgroups in $\mathcal{A}_{\h}$:
\begin{equation*}
\h \subseteq K_1 \subseteq K_2 \subseteq \ldots \subseteq K_m \subseteq \Lambda.
\end{equation*}

By ascending the chain we obtain a chain of orbits $\Or_{K_i}(v) \subseteq \Or_{K_{i+1}}(v)$; the projection of such orbits into $V^{(k)}$ induces a chain of nested shells. We can summarise the situation in the following diagramme:

\begin{equation}\label{scheme}
\begin{CD}
\Or_{\h}(v) \subseteq \La @> \text{extend}>> \Or_{K_1}(v) @>>> \ldots @>>> \Or_{\Lambda}(v) \\
@AA\text{lift} A @VV\text{project} V @VVV @VVV\\
\Or_{\rho_k}(x)  @>>> \pi(\Or_{K_1}(v)) \supseteq \Or_{\rho_k}(x) @>>> \ldots @>>> \pi(\Or_{\Lambda}(v)) \supseteq \ldots \supseteq \Or_{\rho_k}(x) 
\end{CD}
\end{equation} 

In the next section we  present a first application of these results in the case of planar non-crystallographic symmetries. 

\section{Embedding of dihedral groups $\mathcal{D}_{2n}$ and planar nested structures}\label{dihedral_groups}

Let $n >0$ be a natural number. The dihedral group $\mathcal{D}_{2n}$ is the symmetry group of a regular $n$-gon, and consists of $n$ rotations and $n$ reflections, with presentation \cite{holt}
\begin{equation}\label{dihedral}
\D_{2n} = \langle R_n,S : R_n^n = e, SR_n = R_n^{-1}S \rangle,
\end{equation}
where $R_n$ is a rotation by $\frac{2\pi}{n}$, and $S$ a reflection. 

Let $\xi_n = \exp{\frac{2\pi i }{n}} \in \mathbb{C}$ be a primitive root of unity, and let $\Z[\xi_n]$ be the ring of integers  of the field $\mathbb{Q}(\xi_n)$. The standard embedding of $\mathcal{D}_{2n}$ into a $\phi(n)$-dimensional lattice, where $\phi(n)$ denotes the Euler function, is achieved via the \emph{Minkowski embedding} of $\Z[\xi_n]$ \cite{grimm}. Specifically, let $\G$ denote the Galois group of $\mathbb{Q}(\xi_n)$. $\G$ is isomorphic to $\Z_n^{\times} := \{ m \in \Z_n : \text{gcd}(m,n) =  1 \}$, the multiplicative group of $\Z_n$, and therefore consists of $\phi(n)$ elements. Such elements are automorphisms of $\mathbb{Q}(\xi_n)$  given by $\xi_n \mapsto \xi_n^m$, where $n$ and $m$ are coprime, and they are pairwise conjugate. We can then choose $\frac{\phi(n)}{2}$ non-conjugate elements $\sigma_i$ in $\G$, where $\sigma_1$ is the identity. The Minkowski embedding of $\Z[\xi_n]$ is then given by
\begin{equation}\label{mink_lat}
\La_{\phi(n)} := \{(x, \sigma_2(x), \ldots, \sigma_{\frac{\phi(n)}{2}}) : x \in \Z[\xi_n] \} \subseteq \mathbb{C}^{\frac{\phi(n)}{2}} \simeq \R^{\phi(n)},
\end{equation}
which is a lattice in $\R^{\phi(n)}$. The projection $\pi : \La_{\phi(n)} \rightarrow \mathbb{C}$ on the first coordinate is, by construction, one-to-one on its image $\pi(\La_{\phi(n)}) = \Z[\xi_n]$.  

We can define an action of $\D_{2n}$ in $\Z[\xi_n]$ in the following way:
\begin{equation*}
R_n \cdot x = \xi_n x, \qquad S \cdot x = \bar{x},
\end{equation*}
where $\bar{x}$ denotes the complex conjugation in $\mathbb{C}$ and $x \in \Z[\xi_n]$. Note that  this action is well-defined as every element of $\D_{2n}$ stabilises $\Z[\xi_n]$. If $g$ is an element of $I_2(n)$, $g$ can be lifted to an element $\tilde{g}$ defined by
\begin{equation}\label{lifting}
\tilde{g} \cdot \pi^{-1}(x) = \pi^{-1}(g \cdot x),
\end{equation}
which is well-defined since the projection is one-to-one. In particular, we have
\begin{equation*}
\tilde{R}_n \cdot \pi^{-1}(x) =   \pi^{-1}(R_n \cdot x) = \pi^{-1}(\xi_n x) = (\xi_nx, \sigma_2(\xi_n x), \ldots, \sigma_{\frac{\phi(n)}{2}}(\xi_n x)).
\end{equation*}

Similarly we have
\begin{equation*}
\tilde{S} \cdot \pi^{-1}(x) = \pi^{-1}(S\cdot x) = \pi^{-1}(\bar{x}) =  \left(\bar{x}, \overline{\sigma_2(x)}, \ldots, \overline{\sigma_{\frac{\phi(n)}{2}}(x)}\right).
\end{equation*}

It follows that the transformations $\tilde{R}_n$ and $\tilde{S}$ are orthogonal and stabilise the lattice $\La_{\phi(n)}$. Therefore the set $\{ \tilde{g} : g \in \D_{2n} \}$ is an embedding of $\D_{2n}$ into the point group of $\La_{\phi(n)}$. We point out that, although this construction is a priori possible and well-defined for all natural numbers, it is difficult to find the explicit form of the point group of $\La_{\phi(n)}$ in \eqref{mink_lat} for general $n$. The explicit form is known, in particular, for $n=5,8$ and $12$ \cite{grimm}.

We now prove the existence of an extension $K$ of $\D_{2n}$ embedded into $\Po(\La_{\phi(n)})$, i.e. a subgroup $K$ of $\Po(\La_{\phi(n)})$ such that $\D_{2n}$ is a normal subgroup of $K$. Note that  $\D_{2n}$ can be seen as a subgroup of the symmetric group $S_n$, acting on the vertices of a regular $n$-gon. More precisely, let $R_n' = (1,2,\ldots, n)$ be an $n$-cycle and let $S'$ be the permutation defined by $S'(j) = -j \; \text{mod}\; n$, for $j = 1,\ldots, n$; then $\langle R_n', S' \rangle$ defines a permutation representation of $\D_{2n}$.  Let $T = \langle R_n' \rangle \simeq \Z_n$, and define $K$ as the normaliser \cite{artin} of $T$:  
\begin{equation}\label{K_def}
K := N_{S_n}(T) = \{ \sigma \in S_n : \sigma^{-1} T \sigma = T \}.
\end{equation}

We point out that $K$ thus constructed is referred to as the \emph{holomorph} of the group $\Z_n$, and denoted by $\text{Hol}(\Z_n)$ \cite{hall}. We have the following:

\begin{lemma}\label{lemma_perm}
$\mathcal{D}_{2n}$ is a proper normal subgroup of $K = \text{Hol}(\Z_n)$ when $n=5$ or $n \geq 7$.
\end{lemma}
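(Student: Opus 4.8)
The plan is to realise $K = \operatorname{Hol}(\Z_n)$ concretely as the group of affine permutations of $\Z_n$ and to read off both assertions from that description. First I would identify $\{1,\dots,n\}$ with $\Z_n$ so that $R_n' \colon j \mapsto j+1$ generates the translation subgroup $T$, and invoke the standard fact (already recorded in the text) that $N_{S_n}(T) = \operatorname{Hol}(\Z_n) = \Z_n \rtimes \Aut(\Z_n)$. Since $\Aut(\Z_n) \cong \Z_n^\times$ acts by $j \mapsto aj$ with $a \in \Z_n^\times$, every element of $K$ is an affine map $j \mapsto aj + b$ with $a \in \Z_n^\times$ and $b \in \Z_n$; in particular $|K| = n\,\phi(n)$. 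Under this identification $S' \colon j \mapsto -j$ is the affine map with linear part $a = -1$, so $\D_{2n} = \langle R_n', S'\rangle$ is precisely the set of affine maps with $a \in \{\pm 1\}$, of order $|\D_{2n}| = 2n$.

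For normality I would use the homomorphism $\lambda \colon K \to \Z_n^\times$ sending $(j \mapsto aj+b)$ to its linear part $a$. Its kernel is exactly $T$, and $\lambda(\D_{2n}) = \{\pm 1\}$; since $\D_{2n} \supseteq T = \ker\lambda$, this forces $\D_{2n} = \lambda^{-1}(\{\pm 1\})$. As $\Z_n^\times$ is abelian, the subgroup $\{\pm 1\}$ is normal in it, and the preimage of a normal subgroup under a homomorphism is normal; hence $\D_{2n} \trianglelefteq K$. Note that this step never uses the size of $n$: the normality is automatic from the commutativity of $\Aut(\Z_n)$, so the entire content of the stated range of $n$ must sit in the properness claim.

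For properness I would simply compute the index $[K : \D_{2n}] = |K|/|\D_{2n}| = n\phi(n)/(2n) = \phi(n)/2$, which is an integer since $\phi(n)$ is even for $n \geq 3$; thus $\D_{2n} \lneq K$ if and only if $\phi(n) > 2$. It then remains to determine for which $n$ one has $\phi(n) > 2$, and this is the only genuinely arithmetic step, the place where the exceptional small values appear. Using multiplicativity of $\phi$ together with the bounds $\phi(p) = p-1 \geq 4$ for a prime $p \geq 5$, $\phi(2^k) = 2^{k-1} \geq 4$ for $k \geq 3$, and a direct check on the remaining products of $2$'s and $3$'s, one finds that $\phi(n) \leq 2$ holds precisely for $n \in \{1,2,3,4,6\}$. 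Excluding these values (and noting that $n=1,2$ are anyway degenerate, since then $\D_{2n} \hookrightarrow S_n$ is not faithful), the inequality $\phi(n) > 2$ holds exactly when $n = 5$ or $n \geq 7$, which is the asserted range.

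I expect the main obstacle to be bookkeeping rather than conceptual: pinning down that $K$ is genuinely the full affine group (so that $|K| = n\phi(n)$ and $\D_{2n}$ is exactly the $a \in \{\pm 1\}$ subgroup), and then verifying the $\phi(n) \leq 2$ classification cleanly, including the boundary cases $n = 6$ (excluded) versus $n = 5, 7$ (included). Once the affine picture is in place, normality is immediate from the abelianness of $\Aut(\Z_n)$, and properness reduces to this short totient computation.
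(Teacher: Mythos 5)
Your proposal is correct, and it shares the same skeleton as the paper's proof --- both arguments rest on the explicit affine description $K = \{\, j \mapsto mj + l : m \in \Z_n^{\times},\, l \in \Z_n \,\}$ (which the paper derives by conjugating the $n$-cycle, and which you invoke as the standard description of the holomorph, correctly flagging its verification as the main bookkeeping step), and both obtain properness from the order count $|K| = n\phi(n) > 2n$ exactly when $\phi(n) > 2$, i.e.\ when $n = 5$ or $n \geq 7$. Where you genuinely diverge is the normality step. The paper argues by hand: writing $\D_{2n} = T \cup TS'$, it checks $\sigma T = T\sigma$ from the definition of the normaliser and then computes the conjugation explicitly, $\sigma \left( (R_n')^s S' \right) \sigma^{-1} = (R_n')^{ms+2l} S'$, to conclude $\sigma T S' = TS' \sigma$. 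You instead introduce the ``linear part'' homomorphism $\lambda \colon K \to \Z_n^{\times}$, note that $\ker \lambda = T \subseteq \D_{2n}$ so that $\D_{2n} = \lambda^{-1}(\{\pm 1\})$, and conclude normality because $\{\pm 1\}$ is automatically normal in the abelian group $\Z_n^{\times}$. Your route is more structural: it replaces the explicit conjugation computation with the correspondence theorem, it makes transparent that normality holds for \emph{every} $n$ (so that the hypothesis $n=5$ or $n \geq 7$ carries only the properness claim --- a point the paper's proof also implies but never isolates), and it generalises readily (any subgroup of $K$ containing $T$ is normal). The paper's computation, on the other hand, yields the concrete conjugation formula $(R_n')^s S' \mapsto (R_n')^{ms+2l} S'$, which is occasionally useful downstream when one wants to track cosets of $\h$ in $K$ explicitly, as in Theorem \ref{prop1}. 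Both proofs are complete and correct.
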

\begin{proof}
We have, using notation as in (\ref{K_def}): 
\begin{align*}
\sigma \in K & \iff \sigma T\sigma^{-1} = T \iff \sigma R_n\sigma^{-1} \in T \iff \sigma (1,2,\dots,n)\sigma^{-1} = (1,2,\dots,n)^m \\
&\text{ for some $m \in \mathbb{Z}_n$}\iff (\sigma(1),\sigma(2),\dots,\sigma(n)) = (1,2,\dots,n)^m\text{ for some $m \in \mathbb{Z}_n$}  \\
& \text{with $\text{gcd}(m,n) = 1$} \text{(otherwise $(1,2,\dots,n)^m$ decomposes into disjoint cycles)} \\
&\iff \,\forall j \in \mathbb{Z}_n, \sigma(j) = mj+l\text{ for some $m, l \in \mathbb{Z}_n$ with $\text{gcd}(m,n) = 1$.}
\end{align*}

To sum up:
\begin{equation*}
K = \{ \sigma \in S_n : \exists m \in \Z_n^{\times}, l \in \Z_n : \sigma(j) = mj+l, \forall j \in \Z_n \}.
\end{equation*}

$K$ contains $R_n'$ and $S'$, which correspond to $m = 1$, $l = 1$ and $m=-1$, $l=0$, respectively. It follows that $\D_{2n}$ is a subgroup of $K$. We notice that $|K| = \phi(n)n$, which is greater than $2n$ for $n = 5$ or $n \geq 7$. Hence $\mathcal{D}_{2n}$ is a proper subgroup of $K$ for these values of $n$, which correspond to the non-crystallographic cases.

In order to prove the normality, we write
\begin{equation*}
\D_{2n} \simeq  \langle R_n' \rangle \cup \langle R_n' \rangle S' =  T \cup T S'.
\end{equation*}

Define $\sigma \in K$ by $\sigma(j) = mj+l$. We want to prove that $\sigma \D_{2n} = \D_{2n} \sigma$. Clearly $\sigma T = T \sigma$ by definition of $K$ (cf. \eqref{K_def}). We are then left to show that $\sigma T S' = T S' \sigma$.  For $s,j \in \mathbb{Z}_n$ we have $((R_n')^s S')(j) = s-j$, which implies $\left(\sigma \left(\left(R'_n)^sS'\right)\right) \sigma^{-1}\right)(j) = ms-j+2l$. Therefore, $\sigma  \left( \left(R'_n)^sS'\right)\right) \sigma^{-1} = (R'_n)^{ms+2l}S'$, hence $\sigma T S' = T S' \sigma $, and the result follows.
\end{proof}

We now prove the following:

\begin{prop}
$K = \text{Hol}(\Z_n)$ is a subgroup of $\Po(\La_{\phi(n)})$.
\end{prop}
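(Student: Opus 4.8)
The plan is to realise every permutation in $K$ as an explicit orthogonal map on $\Z[\xi_n]$ that stabilises the lattice $\La_{\phi(n)}$, exploiting the decomposition of the holomorph into its translation part (powers of $R_n$) and its multiplier part (the Galois group). Recall from Lemma \ref{lemma_perm} that $\sigma \in K$ precisely when $\sigma(j) = mj + l$ for some $m \in \Z_n^{\times}$ and $l \in \Z_n$. Identifying the vertex $j$ of the $n$-gon with the root of unity $\xi_n^j$, I would first claim that this permutation is induced on $\Z[\xi_n]$ by the $\Z$-linear map $\phi_{m,l}(x) := \xi_n^l\,\sigma_m(x)$, where $\sigma_m \in \G$ is the automorphism $\xi_n \mapsto \xi_n^m$: indeed $\phi_{m,l}(\xi_n^j) = \xi_n^l \xi_n^{mj} = \xi_n^{mj+l}$, so $\phi_{m,l}$ permutes the $n$-th roots of unity exactly as $\sigma$ permutes the vertices. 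A short computation gives $\phi_{m',l'} \circ \phi_{m,l} = \phi_{m'm,\,l'+m'l}$, which matches the composition of $\sigma' \circ \sigma$ in $K$; hence $\sigma \mapsto \phi_{m,l}$ is a faithful homomorphism, and it suffices to show that each $\phi_{m,l}$ lifts, via \eqref{lifting}, to an element of $\Po(\La_{\phi(n)})$.

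Lattice invariance is the easy half. Since $\sigma_m$ is a ring automorphism of $\Z[\xi_n]$ and multiplication by the unit $\xi_n^l$ is a bijection of $\Z[\xi_n]$, the composite $\phi_{m,l}$ restricts to a bijection of $\Z[\xi_n]$; therefore its lift permutes $\La_{\phi(n)} = \pi^{-1}(\Z[\xi_n])$. For orthogonality, the translation part is immediate: multiplication by $\xi_n^l$ is just $R_n^l$, which is orthogonal because $\tilde R_n$ was already shown to be orthogonal in the construction above. So the only remaining point is that each Galois automorphism $\sigma_m$ lifts to an orthogonal transformation.

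This is the crux of the argument. The idea is that in the coordinates of \eqref{mink_lat} the lift $\tilde\sigma_m$ acts by a permutation of the complex coordinates together with complex conjugation in some of them. Writing $\sigma_i : \xi_n \mapsto \xi_n^{a_i}$ for the chosen non-conjugate embeddings ($i = 1, \ldots, \tfrac{\phi(n)}{2}$), the $i$-th coordinate of $\pi^{-1}(\sigma_m(x))$ is $\sigma_i(\sigma_m(x)) = \sigma_{a_i m}(x)$. Because multiplication by $m$ permutes $\Z_n^{\times}$ and descends to a bijection of the set of conjugate pairs $\{\pm a\}$, for each $i$ the index $a_i m$ equals some $a_{i'}$ or some $-a_{i'}$ with $i'$ uniquely determined; in the first case the new coordinate is the old $\sigma_{i'}(x)$, in the second it is its conjugate $\overline{\sigma_{i'}(x)}$. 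Under $\mathbb{C}^{\phi(n)/2} \simeq \R^{\phi(n)}$, this makes $\tilde\sigma_m$ a permutation of the complex blocks composed with conjugation in some blocks. Permuting blocks is trivially isometric, and blockwise conjugation is isometric since $\mathrm{Re}(\bar z \,\overline{\bar w}) = \mathrm{Re}(\bar z w) = \mathrm{Re}(z \bar w)$; hence $\tilde\sigma_m \in O(\phi(n))$.

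Combining the three steps, each $\tilde\phi_{m,l} = \tilde R_n^{\,l}\,\tilde\sigma_m$ is orthogonal and stabilises $\La_{\phi(n)}$, so lies in $\Po(\La_{\phi(n)})$; since $\sigma \mapsto \tilde\phi_{m,l}$ is an injective homomorphism, $K$ embeds as a subgroup of $\Po(\La_{\phi(n)})$. I expect the main obstacle to be precisely the orthogonality of the Galois lifts: one must track how $\sigma_m$ permutes the full set of $\phi(n)$ embeddings, check that this descends to a well-defined permutation-with-conjugation of the $\tfrac{\phi(n)}{2}$ chosen coordinates, and confirm that conjugation within a complex block is an isometry for the Euclidean product making \eqref{mink_lat} a lattice.
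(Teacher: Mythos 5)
Your proof is correct and takes essentially the same route as the paper: your map $\phi_{m,l}(x) = \xi_n^l\,\sigma_m(x)$ is precisely the paper's automorphism $t_{m,l}$ (defined by $\sum_j a_j\xi_n^j \mapsto \sum_j a_j\xi_n^{mj+l}$), and both arguments conclude with the same key computation, namely that the lift to $\R^{\phi(n)}$ acts on the Minkowski coordinates by permuting the complex blocks, conjugating some of them and multiplying by roots of unity, hence is orthogonal and lattice-preserving. The only difference is presentational: you factor the action into a translation part $\tilde R_n^{\,l}$ and a Galois part $\tilde\sigma_m$, whereas the paper handles $t_{m,l}$ in one step via its composition law $t_{m,l}\cdot t_{m',l'} = t_{mm',ml'+l}$.
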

\begin{proof}
Let us define the functions $t_{m,l} \in \Aut(\Z[\xi_n])$ by
\begin{equation}\label{B}
t_{m,l}\left(\sum_{j=0}^{n-1} a_j \xi_n^j\right) := \sum_{j=0}^{n-1} a_j \xi_n^{mj+l}, \quad m \in \Z_n^{\times}, l \in \Z_n.
\end{equation}

Notice that the elements $t_{m,0}$, with $m \in \Z_n^{\times}$, correspond to the Galois automorphisms $\sigma_m$, which constitute the Galois group $\G$ of $\mathbb{Q}(\xi_n)$. 
Let $K' = \{ t_{m,l} : m \in \Z_n^{\times}, l \in \mathbb{Z}_n \}$. $K'$ is a $\G$-containing subgroup of $\Aut(\Z[\xi_n])$. In particular, composition of two elements is given by
\begin{equation}\label{A}
t_{m,l} \cdot t_{m',l'} = t_{mm',ml'+l},
\end{equation}
and the inverse of an element $t_{m,l}$ is $t_{m^{-1},-m^{-1}l}$.  Let $\theta : K \rightarrow K'$ be the function
\begin{equation*}
\theta(\sigma) = t_{m,l}, \qquad \sigma(j) = mj+l.
\end{equation*}

$\theta$ is an isomorphism by construction.  We define $ \mathcal{D}_{2n}' := \theta(\mathcal{D}_{2n})$.  By Lemma \ref{lemma_perm}, we have that $\mathcal{D}_{2n}'$ is a normal subgroup of $K' < \text{Aut}(\Z[\xi_n])$.

We can write the Minkowski embedding of  $\mathbb{Z}[\xi_n]$ as $\mathcal{L}_{\phi(n)} = \{ t_{y_1,0}(z), \dots, t_{y_{\phi(n)/2},0}(z) : z \in \mathbb{Z}[\xi_n] \} \in \mathbb{C}^{\frac{1}{2}\phi(n)} \cong \mathbb{R}^{\phi(n)}$, where $1 = y_1 < \cdots < y_{\phi(n)/2} < \frac{n}{2}$ and $\text{gcd}(y_j,n) = 1 $, for all $j$. We can then lift $t_{m,l}$ as in \eqref{lifting}, and obtain:

\begin{align*}
& \tilde{t}_{m,l} \cdot \pi^{-1}(z)  = \pi^{-1}(t_{m,l}(z)) =   \left(t_{m,0}(t_{m,l}(z)),\dots,t_{my_{\phi(n)/2},0}(t_{m,l}(z))\right) \\
&  \underset{(\text{by \eqref{A}})}{=} \; \left( t_{my_1,y_1l}(z), \ldots, t_{my_{\phi(n)/2}, y_{\phi(n)/2}}(z) \right) \underset{(\text{by \eqref{B}})}{=} \; \left( \xi_n^{y_1l} t_{my_1,0}(z), \ldots, \xi_n^{y_{\phi(n)/2}l} t_{my_{\phi(n)/2}}(z) \right).
\end{align*}

Therefore $\tilde{t}_{m,l}$ just rearranges the coordinates of $\pi^{-1}(z)$, possibly converting some of them to their complex conjugates and/or multipliying them by a power of $\xi_n$. Hence $K'$ stabilises the lattice $\La_n$, and this action is orthogonal. Thus $K'$ is a subgroup of $\Po(\La_n)$, and the result is proved.
\end{proof}

It follows that we can construct nested point sets with $n$-fold symmetry using the extension $K = \text{Hol}(\Z_n)$ and the Minkowski embedding $\La_{\phi(n)}$. The number $r$ of distinct radial levels obtained via projection is at most 
\begin{equation*}
r \leq [K : \h] = \frac{\phi(n) n}{2n} = \frac{\phi(n)}{2}.
\end{equation*}

\subsection{Five-fold symmetry}

As a first example, we consider the case $n = 5$. In this case, the Minkowski embedding of $\D_{10}$ is isomorphic to the root lattice $A_4$ \cite{grimm}, whose simple roots are given by $\alpha_i = e_i-e_{i+1}$, for $i = 1,\ldots, 4$, and $e_i$ denotes the standard basis of $\R^5$ (cf. also \cite{carter} for more details on root systems of semisimple Lie algebras). With respect to the basis of simple roots, we obtain a representation $\h$ of $\D_{10}$ which is a subgroup of the lattice group $\Lambda(A_4)$ (which is isomorphic to the symmetric group $S_5$):   
\begin{equation}\label{H2_rep}
\h = \left< \left( \begin{array}{cccc}
1 & 0 & 0 & 0 \\
1 & 0 & 0 & -1 \\
1 & 0 & -1 & 0 \\
1 & -1 & 0 & 0
\end{array} \right), \left( \begin{array}{cccc}
-1 & 1 & 0 & 0 \\
0 & 1 & 0 & 0 \\
0 & 1 & 0 & -1 \\ 
0 & 1 & -1 & 0
\end{array} \right) \right>.
\end{equation}

This representation splits into two two-dimensional irreps, which induce a decomposition $\R^4 \simeq E^{(1)} \oplus E^{(2)}$, where $E^{(1)}$ and $E^{(2)}$ are both totally irrational with respect to the root lattice $A_4$. A basis for each of them can be found using tools from the representation theory of finite groups \cite{fulton}. The projection $\pi_2^{(1)} : \R^4 \longrightarrow E^{(1)}$ is given by
\begin{equation}\label{H2_proj}
\pi_2^{(1)} = \frac{1}{\sqrt{2(3-\tau)}}\left( \begin{array}{cccc}
-\tau'\sqrt{3-\tau} & \sqrt{3-\tau} & 0 & -\sqrt{3-\tau} \\
-1 & 2-\tau & -2\tau' & 2-\tau 
\end{array} \right),
\end{equation}
where $\tau:= \frac{1}{2} \left( 1+\sqrt{5}\right)$ denotes the golden ratio and $\tau':= 1-\tau$ its Galois conjugate. The space $E^{(1)}$ carries the irrep $\rho_2$:
\begin{equation}\label{H2_irrep}
\rho_2 = \left< \left( \begin{array}{cc}
1 & 0 \\
0 & -1 
\end{array} \right), \frac{1}{2} \left( \begin{array}{cc}
-\tau' & \sqrt{\tau+2} \\
\sqrt{\tau+2} & \tau' 
\end{array} \right)
\right>.
\end{equation}

With the help of \texttt{GAP}, we study the set $\mathcal{A}_{\h}$ of subgroups of $\Lambda(A_4)$ containing $\h$ (compare with \eqref{subgroups}). There is a unique chain of subgroups containing a proper extension of $\h$:
\begin{equation}\label{five_fold_chain}
\h \lhd K \subseteq \Lambda(A_4),
\end{equation} 
where $K$ is, in fact, isomorphic to $\text{Hol}(\Z_5)$. The explicit representation of $K$ is given by 

\begin{equation*}
K = \left< \left( \begin{array}{cccc}
0 &  -1 &  1 &  0 \\  
-1 &  0 &  1 &  0 \\ 
 0 &  0 &  1 &  0 \\ 
 0 &  0 &  1 &  -1
\end{array} \right),  \left( \begin{array}{cccc}
0 &  0 &  0 &  -1 \\
1 &  0 &  0 &  -1 \\
0 &  1 &  0 &  -1 \\ 
0 &  0 &  1 &  -1
\end{array} \right) ,  \left( \begin{array}{cccc}
1 &  0 &  0 &  0 \\
1 &  0 &  -1 &  1 \\ 
0 &  1 &  -1 &  1 \\  
0 &  1 &  -1 &  0
\end{array}\right)
\right>. 
\end{equation*}

The group $K$ corresponds to the point group $54$ given in \cite{janner2}. We point out that, by Theorem \ref{prop1}, the point sets obtained from projection of orbits of points of the root lattice $A_4$ consists of at most two radial levels, which can either be two nested decagons of two nested pentagons. In Figure \ref{quasicrystal_fragments} we show an example of such point sets.  

\begin{figure}[!t]
\centering
\begin{minipage}[t]{0.25\textwidth}
\centering
\includegraphics[scale = 0.13]{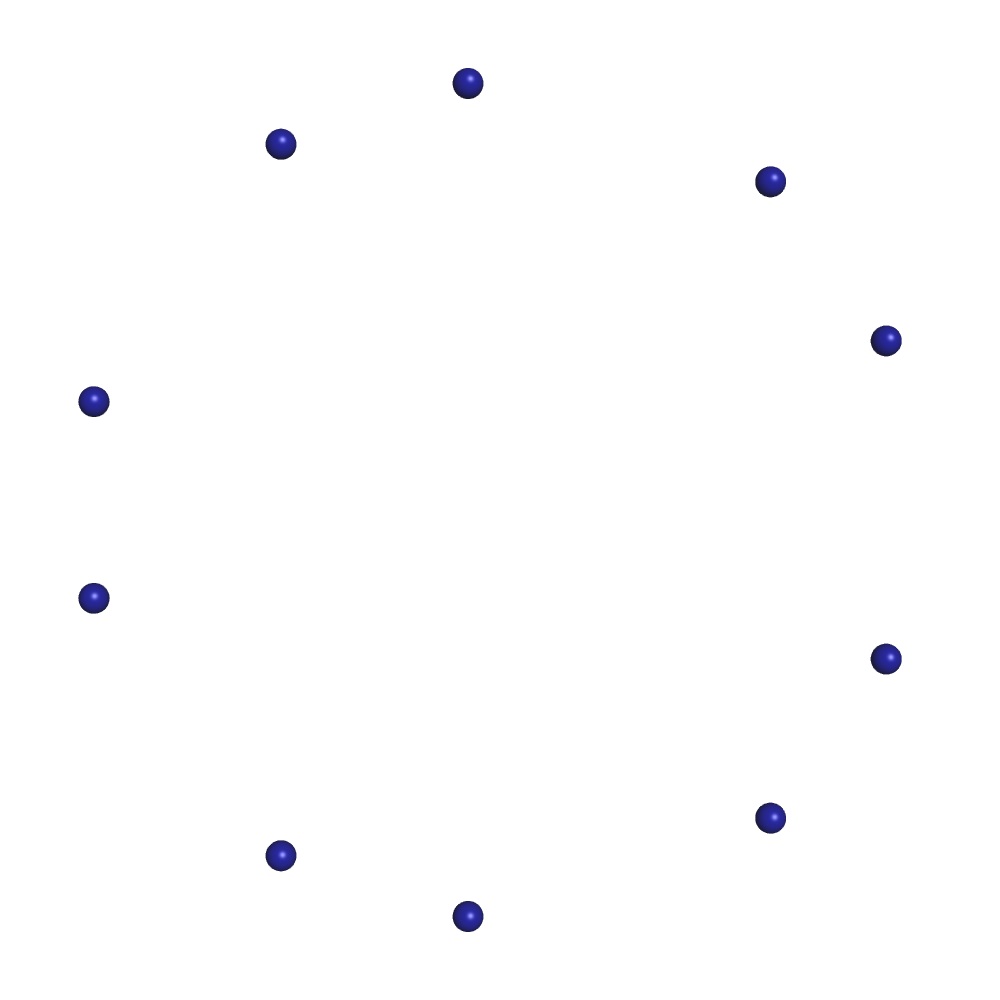}
(a)
\end{minipage}
\qquad
\begin{minipage}[t]{0.25\textwidth}
\centering
\includegraphics[scale = 0.13]{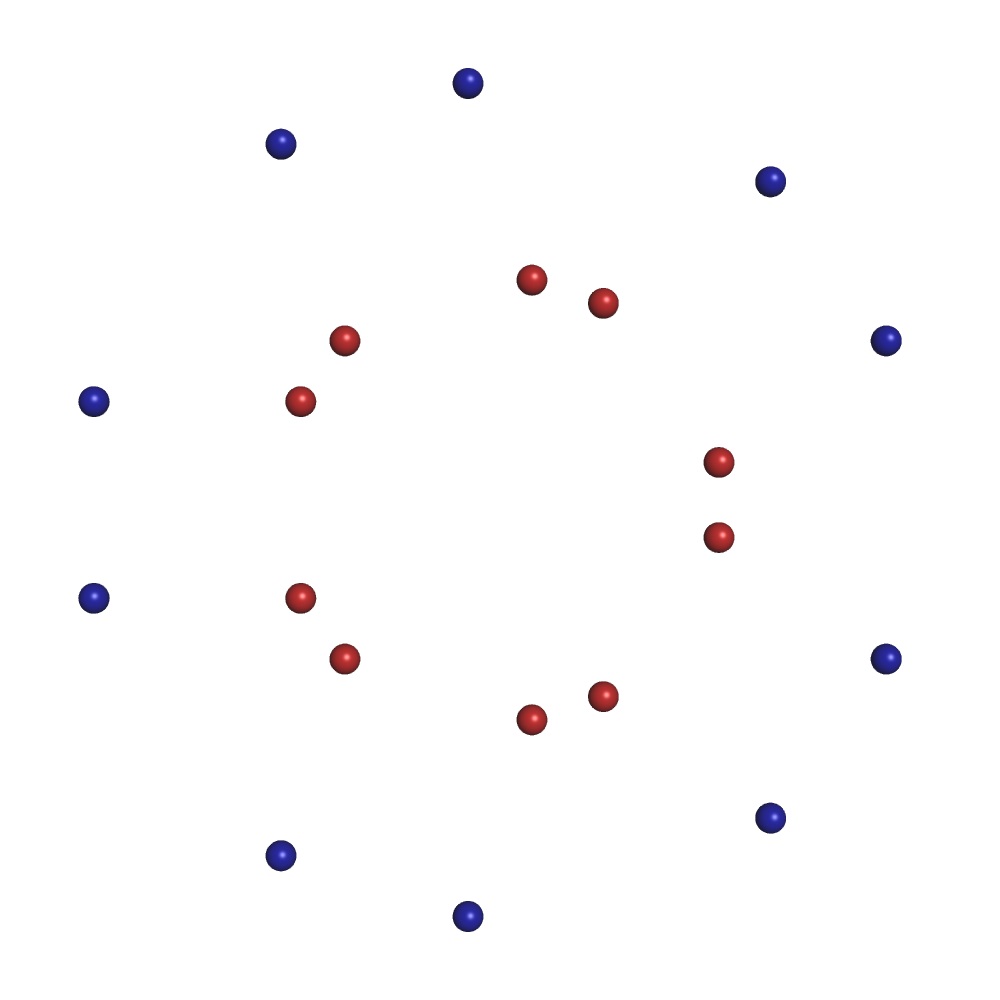} 
(b)
\end{minipage}
\qquad
\begin{minipage}[t]{0.25\textwidth}
\centering
\includegraphics[scale = 0.13]{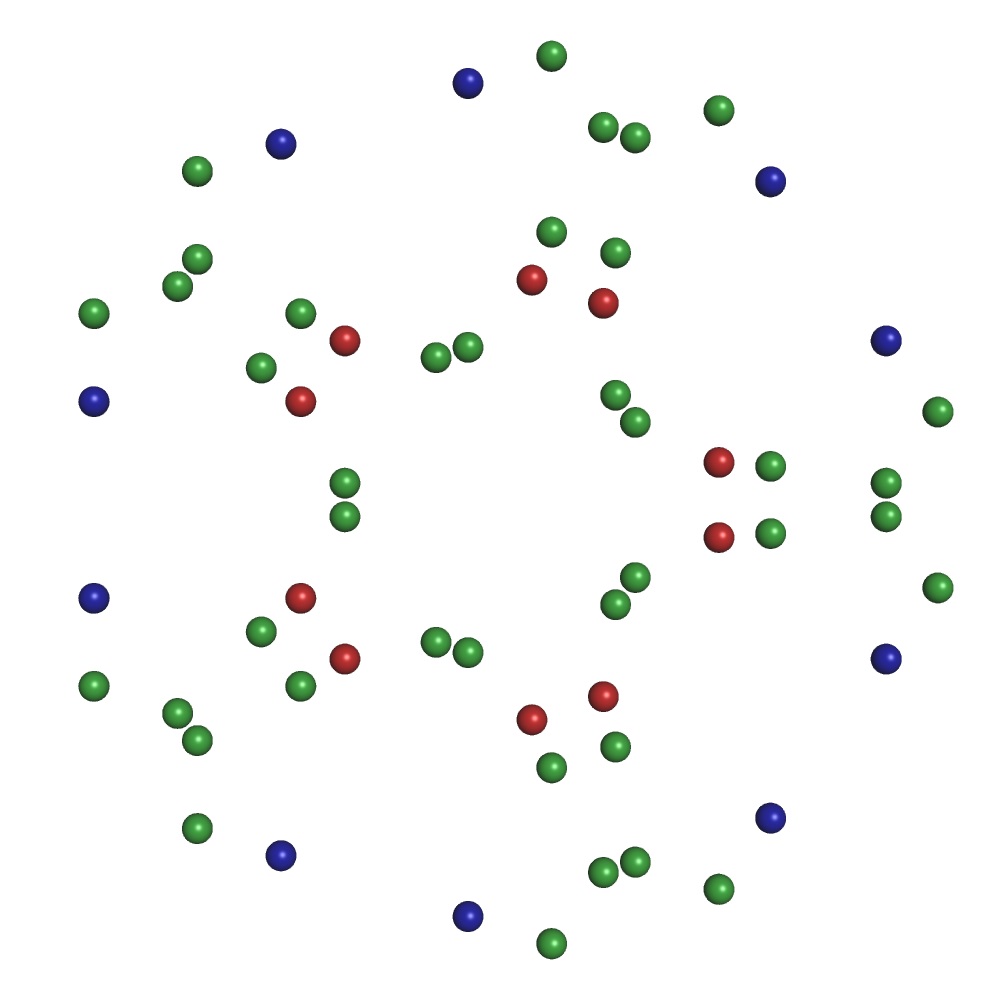} 
(c)
\end{minipage}
\caption{Examples of point sets with five-fold symetry via projection of orbits of points from the $A_4$ root lattice in $\R^4$. The lattice point $v = (1,2,4,3)$ (whose coordinates are written with respect to the basis of simple roots) is taken as seed point from the orbits under the groups (cf. \eqref{five_fold_chain}): (a) $ \h \simeq \D_{10}$, (b) $ K \simeq \text{Hol}(\Z_5)$ and (c) $\Lambda(A_4) \simeq S_5$.} 
\label{quasicrystal_fragments}
\end{figure}

\section{Nested point sets with icosahedral symmetry}\label{ico_section}

As mentioned in the Introduction, icosahedral symmetry plays a fundamental role in virology, carbon chemistry and quasicrystals. For applications in the natural sciences, it is important to distinguish between \emph{chiral} and \emph{achiral} symmetry. Chiral icosahedral symmetry is described by the icosahedral group $\I$, which consists of all the rotations that leave an icosahedron invariant and admits the presentation:
\begin{equation*}
\I = \langle g_2, g_3 : g_2^2 = g_3^3 = (g_2g_3)^ 5 = e \rangle,
\end{equation*}
where $g_2$ and $g_3$ are a two- and three-fold rotation, respectively. It has order $60$ and it is isomorphic to the alternating group $\mathfrak{A}_5$. On the other hand, achiral icosahedral symmetry corresponds to the full symmetries of an icosahedron (i.e. reflections included), and it is described by the Coxeter group $H_3$, whose order is $120$ and it is isomorphic to $\I \times \Z_2$. 

For applications in virology, we focus firstly on chiral icosahedral symmetry, since not all viral capsid are invariant under reflections. Since the icosahedral goup contains five-fold symmetry, it is not crystallographic in 3D. Its minimal crystallographic dimension, in the sense of Definition \ref{cryst_def},  is six \cite{senechal}. In particular, there are exactly three Bravais lattices left invariant by $\I$ in $\R^6$, namely the simple cubic (SC), face-centered cubic (FCC) and body-centered cubic (BCC) lattices \cite{levitov}. The point group of these lattices is the hyperoctahedral group in six dimensions, which we denote by $B_6$ (cf. \eqref{point_group}):
\begin{equation*}
B_6 = \{ Q \in O(6) : Q = M \in GL(6,\Z) \} = O(6) \cap GL(6,\Z),
\end{equation*}
which consists of all the $6 \times 6$ orthogonal and integral matrices. It is isomorphic to the wreath product $\Z^2 \wr \S_6$, where $S_6$ denotes the symmetric group on $6$ elements (see Appendix). Its order is $2^6 6! = 46,080$.  In what follows, we will focus on the SC lattice:
\begin{equation*}
\La_{SC} := \bigoplus_{i=1}^6 \Z \e_i,
\end{equation*}
where $e_i$, $i=1,\ldots, 6$ is the standard basis of $\R^6$;  its point and lattice groups coincide (cf. \eqref{point_group} and cf. \eqref{lattice_group}). The crystallographic representations of $\I$ have been classified in \cite{zappa}. They are all conjugated in $B_6$; a representative $\hat{\I}$ can be chosen as the following:

\begin{equation}\label{gen}
\hat{\I}(g_2) =  \left( \begin{array}{cccccc}
0 & 0 & 0 & 0 & 0 & 1 \\
0 & 0 & 0 & 0 & 1 & 0 \\
0 & 0 & -1 & 0 & 0 & 0 \\
0 & 0 & 0 & -1 & 0 & 0 \\
0 & 1 & 0 & 0 & 0 & 0 \\
1 & 0 & 0 & 0 & 0 & 0 
\end{array} \right), \quad \hat{\I}(g_3) = \left( \begin{array}{cccccc}
0 & 0 & 0 & 0 & 0 & 1 \\
0 & 0 & 0 & 1 & 0 & 0 \\
0 & -1 & 0 & 0 & 0 & 0\\
0 & 0 & -1 & 0 & 0 & 0 \\
1 & 0 & 0 & 0 & 0 & 0 \\
0 & 0 & 0 & 0 & 1 & 0 
\end{array} \right).
\end{equation}

$\hat{\I}$ leaves two three-dimensional subspaces invariant, denoted by $E^{\parallel}$ and $E^{\perp}$, that are both totally irrational with respect to the SC lattice. It follows that $\hat{\I}$ decomposes, in $GL(6,\R)$, into two three-dimensional irreps, usually denoted by $T_1$ and $T_2$. An explicit form of $T_1$, useful for computations, is given by 

\begin{equation}\label{ico_irrep}
T_1(g_2) = \frac{1}{2} \left( \begin{array}{ccc}
-\tau' & 1 & \tau \\
1 & -\tau & -\tau' \\
\tau & -\tau' & -1
\end{array} \right), \qquad T_1(g_3)= \frac{1}{2} \left( \begin{array}{ccc}
\tau & -\tau' & 1 \\
\tau' & -1 & \tau \\
1 & -\tau & \tau'
\end{array} \right).
\end{equation}

The projection $\pi^{\parallel} : \R^6 \rightarrow E^{\parallel}$ is given by
\begin{equation}\label{proj_H3}
\pi^{\parallel} =  \frac{1}{\sqrt{2(2+\tau)}} \left( \begin{array}{cccccc}
\tau & 0 & -1 & 0 & \tau & 1 \\
1 & \tau & 0 & -\tau & -1 & 0 \\
0 & 1 & \tau & 1 & 0 & \tau 
\end{array} \right).
\end{equation}

For achiral icosahedral symmetry, the crystallographic representations of $H_3$ are easily computed using the direct product structure $\I \times Z_2$. Specifically, if $\Gamma = \{1, -1 \}$ is the non-trivial irrep of $\Z_2$, then the representation $\hat{\I} \otimes \Gamma$ is a crystallographic representation of $H_3$, and is such that $\hat{\I} \otimes \Gamma \simeq T_1 \otimes \Gamma \oplus T_2 \otimes \Gamma$ in $GL(6,\R)$. We point out that there exist other crystallographic groups in six dimensions which contain the icosahedral group as a subgroup, and these can be found using the \texttt{GAP} package \texttt{CARAT} \cite{carat}. However, the representations of $\I$ induced by this embedding do not split into two three-dimensional irreps of $\I$, according to the classification provided by \cite{levitov}, and hence they are not suitable for the construction of nested point sets by projection presented here.   
 
In order to construct nested structures with icosahedral symmetry, we consider the set  of all the $\hat{\I}$-containing subgroups of $B_6$ (cf. \eqref{subgroups}):
\begin{equation}\label{ico_set}
\mathcal{A}_{\hat{\I}} := \{ K < B_6 : \hat{\I} < K \}.
\end{equation}

With the help of \texttt{GAP}, it is possible to compute the set $\mathcal{A}_{\hat{\I}}$. In order to  make computations efficient, we use some results from group theory. In particular, we recall that, if $G$ is a soluble group, then every subgroup of $G$ is soluble \cite{humpreys}. Since the icosahedral group is isomorphic to $\mathfrak{A}_5$, it is not soluble. Therefore, any subgroup $H$ of $B_6$ containing $\hat{\I}$ as a subgroup must not be soluble. Moreover, it cannot be Abelian (since $\I$ is not) and the order of $H$ must be divisible by $|\I| = 60$, as a consequence of Lagrange's Theorem. With these considerations, we provide the following algorithm.

\begin{alg}\label{alg_ico}
In order to determine $\mathcal{A}_{\hat{\I}}$, perform the following steps:
\begin{enumerate}
\item Compute the conjugacy classes $\mathcal{C}_i$ of the subgroups of $B_6$.
\item List a representative $K_i$ for each class $\mathcal{C}_i$.
\item Rule out those representatives which have one of the following properties:
\begin{itemize}
\item $K_i$ is soluble;
\item $K_i$ is Abelian;
\item $60 \nmid |K_i|$.
\end{itemize}
\item For each $K_i$ not ruled out, compute all the element $G_i \in \mathcal{C}_i$. If  $\hat{\I} < G_i$, then add $G_i$ to $\mathcal{A}_{\hat{\I}}$.
\end{enumerate}
\end{alg}

The algorithm was implemented in \texttt{GAP}, and the results are given in Table \ref{sottogruppi}. There are $13$ elements in $\mathcal{A}_{\hat{\I}}$, which we denote by $G_i$, for $i = 1, \ldots, 13$. A set of generators for each group $G_i$ is given in the Appendix. Clearly, $G_1$ and $G_{13}$ are the icosahedral and hyperoctahedral group, respectively, whereas $G_2$ is isomorphic to $H_3$. In Figure \ref{sottogruppi_lattice} we show the graph of inclusions of the groups $G_i$. 

\begin{table}[!t]
\begin{center}
\begin{tabular}{|ccc|}
\hline
Subgroup & Order & Index  \\
\hline
$G_1 \simeq \I$ & $60$ & 1\\
$G_2 \simeq H_3$ & $120$ & 2\\
$G_3$ & $240$ & 4 \\
$G_4$ & $1,920$ & 32\\
$G_5$ & $3,840$ & 64\\
$G_6$ & $3,840$ & 64\\
$G_7$ & $3,840$ & 64\\
$G_8$ & $7,680$ & 128\\
$G_9$ & $11,520$ & 192\\
$G_{10}$ & $23,040$ & 384\\
$G_{11}$ & $23,040$ & 384\\
$G_{12}$ & $23,040$ & 384\\
$G_{13} = B_6$ & $46,080$ & 768\\
\hline
\end{tabular}
\end{center}
\caption{Classification of the subgroups of $B_6$ containing the icosahedral group $\I$ as a subgroup.}
\label{sottogruppi}
\end{table}

\begin{figure}[!t]
\includegraphics[scale = 0.6]{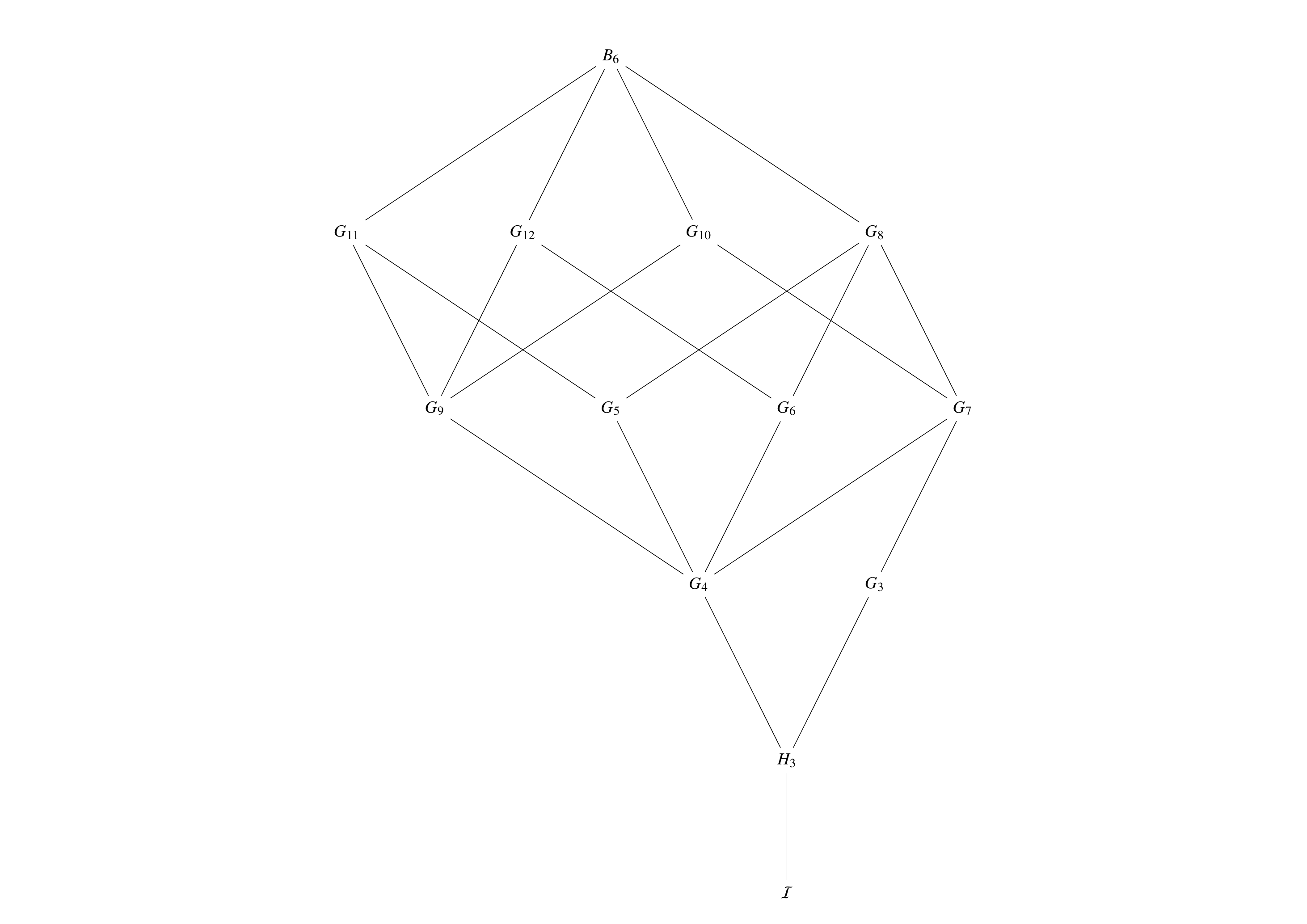}
\caption{Graph of inclusions of the subgroups containing the icosahedral group embedded into the hyperoctahedral group.}
\label{sottogruppi_lattice}
\end{figure} 

The projections into $E^{\parallel}$ of the orbits of lattice points under the groups $G_i$ produce nested point sets with icosahedral symmetry at each radial level. An example is given in Figure \ref{orbit}.  Every radial level corresponds to the union of cosets of $G_i$ with respect to $\hat{\I}$. It is worth pointing out that every group $G_i$, for $i > 3$, contains $H_3$ as well as $\I$ as subgroups. From a geometrical point of view, this implies that the resulting orbits in projection are all invariant under reflections, i.e. each radial lavel possesses full icosahedral symmetry $H_3$. This observation provides a sharper bound on the number of distinct radial levels in projection: in fact, this is given by $n/2$, which is the index of $H_3$ in $G_i$ (recall that $n$ is the index of $\I$ in $G_i$). 

\begin{figure}[!t]
\begin{center}
\includegraphics[scale = 0.15]{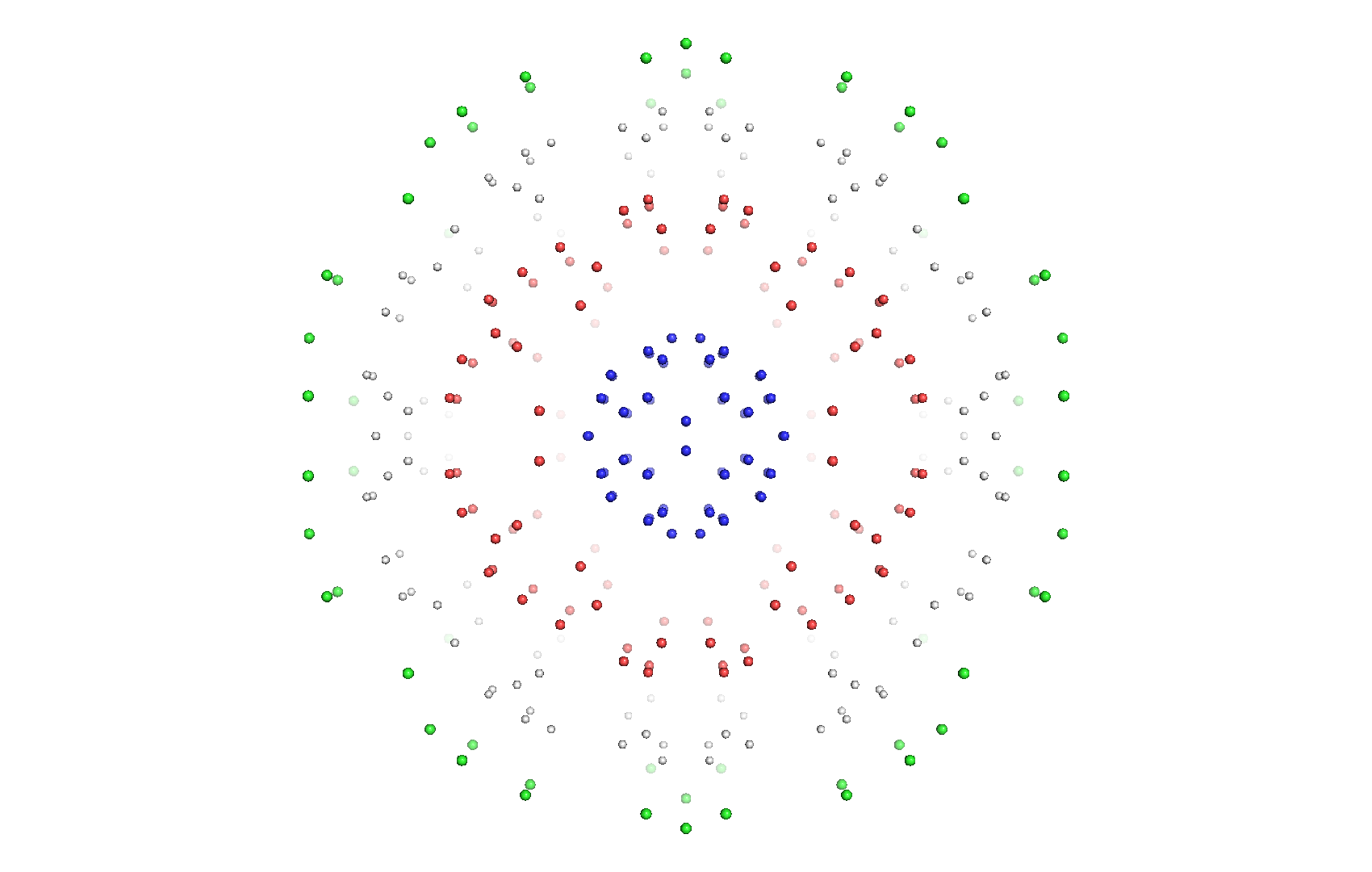}
\caption{Projected orbit of the lattice point $v = (0, 0, 1, 1, 2, 1)$ under the group $G_4$. Each layer in the resulting nested point set possesses achiral icosahedral symmetry.} 
\label{orbit}
\end{center}
\end{figure} 

\section{Applications to viral capsid architecture}\label{virus}

In this section we show that this group theoretical setup is a powerful tool to rationalise viral architecture. Specifically, the classification of the chains of subgroups of $B_6$ extending icosahedral symmetry, derived in Section \ref{ico_section}, provides a suitable mathematical framework to understand structural constraints on viral capsids. As a first step towards this goal, we identify a finite library of point arrays, corresponding to the projected orbits of $6D$ lattice points under the groups $G_i$ previously classified. Elements in this library depend on two quantities: the group $G_i \in \mathcal{A}_{\hat{\I}}$ and the lattice point $v \in \La_{SC}$. The $G_i$ are provided by our classification. As can be seen from Figure \ref{sottogruppi_lattice} and Table \ref{sottogruppi}, the first group that gives icosahedral nested shells in projection is $G_3$. The index of $G_3$ with respect to $H_3$ is $2$, therefore the number of radial levels is at most 2.  In order to obtain deeper information about the geometry of capsids, more radial levels are necessary. Therefore, we neglect the orbits of $G_3$ and consider the subgroups $G_i$, for $ i = 4, \ldots, 13$. Moreover, $v$ is chosen as follows: since the 6D lattice is infinite, we introduce a cut-off parameter $N >0$ and consider all lattice points within a six-dimensional cube: 
\begin{equation*}
I_N^6 := [-N, N] \times \ldots \times [-N,N] = [-N,N]^6 \subseteq \La_{SC},
\end{equation*}
containing $(2N+1)^6$ lattice points. In particular, we consider all orbits of the groups $G_i$ within a bounded area around the origin defined by $N$. 

Based on this set-up, the library of point arrays is obtained via the action of the group $G_i$ on the set $I_N^6$, for $i =4, \dots, 13$. This action is well-defined since $G_i$ is a subgroup of the point group of the lattice, and therefore lattice points are mapped into lattice points under elements of $G_i$. Let $D_N^{(i)} = \{v_1^{(i)}, \ldots, v_{k_i}^{(i)} \}$ be a set of distinct representatives for the orbits of $G_i$ in $I_N^6$. Since $G_4 \subseteq G_i$ for all $i = 5, \ldots, 13$, and thus their fundamental domains are contained in that of $G_4$, the set $D_N^{(4)}$ contains the sets of representatives $D_N^{(i)}$ for the groups $G_i$, $i = 5, \ldots, 13$, which are not necessarily distinct. Since we do not have information on the group $G_4$ apart from its generators, the set $D_N^{(4)}$ is computed numerically according to the following procedure:
\begin{enumerate}
\item For $v \in I_N^6$, compute $\Or_{G_4}(v)$; 
\item among all $v_i \in \Or_{G_4}(v)$ identify ${\hat v}$ with the largest number of positive components, choosing at random if two or more points fulfil this property; 
\item add ${\hat v}$ to $D_N^{(4)}$ and repeat from the start until all $v \in I_N^6$ have been considered.
\end{enumerate}
In particular, $D_N^{(4)}$ thus obtained contains $47$, $183$ and $529$ points for $N = 2,3$ and $4$, respectively. With this setup, the library is given by
\begin{equation}\label{library}
\mathcal{S}(N) := \left\{ \{\pi^{\parallel}(\Or_{G_j}(v))\} : v \in D_N^{(4)}, \; j = 4, \ldots, 13 \right\},
\end{equation} 
which by construction consists of distinct point arrays.

Once the set $\mathcal{S}(N)$ is computed for a chosen value of $N$, we retrieve the information of the viral capsid in consideration from the VIPER data bank \cite{viper}. These PDB files contain structural data of viral capsid, such as the coordinates of the atomic positions of the capsid proteins and in many cases also of the packaged genome. Following \cite{wardman}, we represent the atomic positions of the proteins by spheres of radius 1.9 $\AA$ in the visualisation tool PyMol.  In order to compare the point arrays with biological data, and hence find those point sets which best represent the capsid features, we use the following procedure:

\begin{enumerate}
\item For any group $G_i \in \mathcal{A}_{\hat{\I}}$, we compute with $\texttt{GAP}$ a transversal $T^{(i)} = (g^{(i)}_1, \ldots, g^{(i)}_{n_i})$ for the right cosets of $\hat{\I}$ in $G_i$, where $n_i$ denotes the index of $\I$ in $G_i$.
\item Given a point array $\pi^{\parallel}(\Or_{G_i}(v)) \in \mathcal{S}(N)$, we compute the set
\begin{equation*}
L^{(i)}(v) = \{ |\pi^{\parallel}(g^{(i)}_j v)| : j =1, \ldots, n_i \}. 
\end{equation*}
The cardinality of $L^{(i)}(v)$ is the number of distinct radial levels in the point set $\pi^{\parallel}(\Or_{G_i}(v))$.  We denote by $R^{(i)}_{max}(v) := \text{max}_j L^{(i)}(v)$ the largest radial level which corresponds to the outermost layer in the nesting. This is used to scale the point set so that the capsid is contained in the convex hull of the projected orbit. 
\item The rescaled orbit is then compared with the data in the PDB file. We start by selecting those point arrays whose outermost layer best represents the outermost features of the capsid. Specifically, we consider a coarse-grained representation of the capsid surface by locating the most radially distal clusters of $C_{\alpha}$ atoms using the procedure described by \cite{wardman}. Denoting these clusters by $C_k$, $k = 1, \ldots, M$, the $C_k$ can be approximated by $M$ spheres $B_k(\tilde{r})$ of radius $\tilde{r}$ (for the numerical implementation, we chose the cutoff $\tilde{r} = 10 \AA$). For any orbit $\pi^{\parallel}(\Or_{G_i}(v))$, we isolate its external point layer $L^{(\text{out})}$ by computing the points situated at distance $R_{max}$ (introduced above) from the origin. The orbit is then selected if, for every point $x \in L^{\text{(out)}}$, there exists $k \in \{1, \ldots, M\}$ such that $x \in B_k(\tilde{r})$. 
\item Among the point sets thus selected, we determine those that best match the other capsid features. For this, we isolate the inner radial levels using the decomposition of orbits into cosets and compare them with the location of the genomic material  and the inner capsid surface. The cardinalities of the point arrays are not large enough to match with atomic positions, but they rather map around material as in \cite{jess}; this comparison can be achieved via visual inspection using the surface representation of the capsid in PyMol. 
\end{enumerate}

We consider here two case studies: Pariacoto Virus and Bacteriophage MS2, both $T = 3$ capsids in the Caspar-Klug classification. These were chosen in order to facilitate comparison with \cite{jess},  where point arrays derived from affine extensions of the icosahedral group were used to generate blueprints for viral architecture, and \cite{janner4}, where virus architecture is approximated by lattices. 

\paragraph{Pariacoto Virus}
Pariacoto Virus (PaV) is a single-stranded RNA insect virus, whose X-ray crystal structure reveals approximately $35 \%$  of the RNA organised as a dodecahedral cage of duplex RNA in proximity to the  the inner capsid surface \cite{tang}. A characteristic feature of this capsid are the $60$ protrusions of approximately $15 \AA$ around the quasi three-fold axes, each formed by three interdigitated subunits. These are the outermost capsid features that we will match to the largest radial levels in the point arrays of our library in order to identify the best fit point array. For this we performed the procedure described above, and found that the best fit for this capsid is given by the projected orbit of the lattice point $\hat{v} = (2, 1, -1, -1, 0, 0)$ under the group $G_6$ (see Figure \ref{pariacoto}). This point set consists of $960$ points, arranged into $8$ radial levels. The outermost level is formed by 60 points which map onto the spikes at the $60$ local three-fold axes, see Figure \ref{pariacoto} (b). The third radial level from the origin describes the organisation of the RNA inside the capsid. This set is made up of $120$ points forming a truncated icosidodecahedron, which maps around the dodecahedral RNA cage, see Figure \ref{pariacoto} (d). The fifth radial level from the origin, located between the RNA and the spikes, consists of $120$ points, organised into 10 and 12 clusters of 6 and 5 points each, which are located around the $3$ and $5$ fold axes, respectively. In particular, we show in Figure \ref{pariacoto} (c) a close-up view of the clusters with five-fold symmetry. Note that these points provide constraints on the lengths of the protein helices and the positions of the protein subunits of type $C$. 

We point out that $G_6$ is the group of smallest order in the set $\mathcal{A}_{\hat{\I}}$ that provides a blueprint for PaV that captures the location of both capsid proteins \emph{and} the RNA collectively. The orbit of $\hat{v}$ under $G_4$ in projection, which by construction is contained in $\pi^{\parallel}(\Or_{G_6}(\hat{v})$, maps around the spikes, but totally lacks information on the organisation of the genomic material inside. Moreover, all the orbits of $\hat{v}$ under the $G_6$-containing $G_k \in \mathcal{A}_{\hat{\I}}$, i.e. $G_8$ and $G_{12}$, as well as $B_6$ (cf. Figure \ref{sottogruppi_lattice}) coincide in projection, implying that they contain no additional information on capsid architecture. Hence $G_6$ can be chosen as the six-dimensional symmetry group that induces the three-dimensional structure of the PaV capsid in projection. 

\begin{figure}[!t]
\begin{center}
\includegraphics[scale = 0.4]{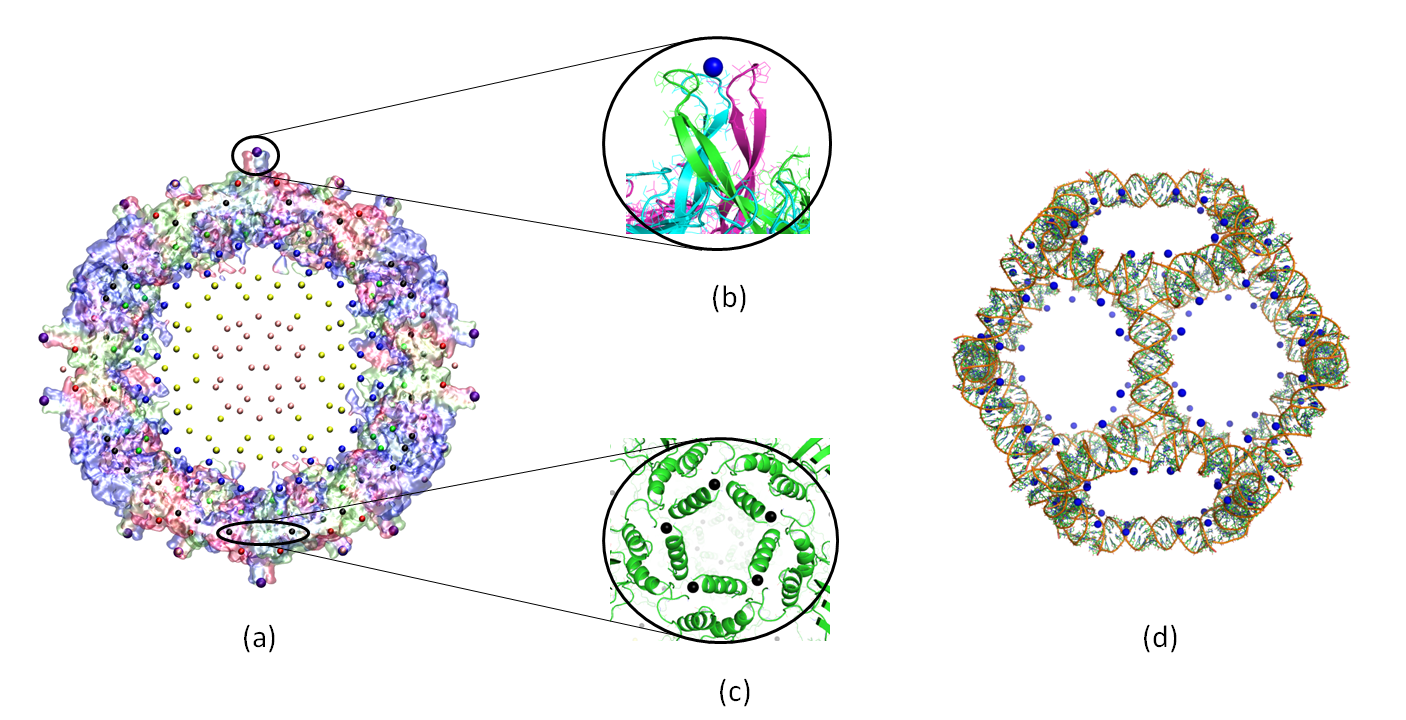}
\label{pariacoto}
\caption{Blueprints for the capsid of Pariacoto Virus (based on pdb file 1f8v). (a) Cross-section of the capsid superimposed with the projected orbit of $\hat{v} = (2,1,-1,-1,0,0)$ under the group $G_6$. The point set consists of $960$ points situated at 8 distinct radial levels which provides constraints on the capsid architecture. (b) Close-up view of the outermost layer of projected orbit, which encodes the locations of the spikes around the quasi-threefold axes. (c) The layers between the spikes and the genomic material map around the inner surface of the capsid proteins. (d) The third farthest layer from the origin gives information on RNA organisation: the 120 points, forming a truncated icosidodecahedron, map around the dodecahedral RNA cage.}
\end{center}
\end{figure}

\paragraph{Bacteriophage MS2} Like PaV, MS2 is a single-stranded RNA virus, with a $T = 3$ capsid. Cryo-electron microscopy reveals a double-shell structure in the organisation of the genomic RNA \cite{toropova}, and we will demonstrate here that our approach is able to capture this. With our procedure as above, we found that the projected orbit of $\tilde{v} = (2,1,1,-1,0,1)$ under the group $G_4$ is the point set that provides the best blueprint for the capsid (see Figure \ref{ms2}). Specifically, this orbit contains $960$ points, that are arranged, in projection, into $9$ radial levels. The two outermost layers, $L^{(9)}$ and $L^{(8)}$, map to the exterior of the capsid: $L^{(9)}$ consists of $60$ points, arranged into $12$ clusters of $5$ points each, positioned around the five-fold symmetry axes of the capsid, whereas $L^{(8)}$ has size $120$ and is made up of $20$ clusters of $6$ points, located around the three-fold axes. This is consistent with the quasi-equivalent structure of the $T = 3$ capsid. We point out that $L^{(8)}$ and $L^{(9)}$ are in fact almost situated at the same radial level (the ratio of their radii is $\simeq 1.064814$), and collectively map around the capsid exterior as demonstrated in the close-up in Figure \ref{ms2_asymmetry} (b). 

All other points of the array are from a mathematical point of view related to these outmost shells, and should therefore also map around material boundaries. We start by comparing the point array with the icosahedrally averaged cryo-EM structure at $8 \AA$ resolution in \cite{toropova}. As shown in Figure \ref{ms2} (a), the innermost radial levels of the point array define the interior of the inner RNA shell. Moreover, there are points mapping around the outer and inner surfaces of the other shell. There is a layer of points between the shells that at a first glance seems to float in space, but a close inspection of the data set reveals that they are in fact positioned around the RNA connecting the outer and inner shell, see also the close-up in Figure \ref{ms2} (c). This icosahedrally averaged data set has been obtained via a superposition of a large number of viral particles, aligned according to their symmetry axes, in order to enhance the resolution. However, in any individual particle, the RNA is organised in an asymmetric way, that is consistent with the icosahedrally averaged density. Since our point arrays are not fully constraining the structure, but are providing blueprints for the overall organisation of the virus, we expect the asymmetric organisation to be consistent with our symmetric point arrays. In order to test this hypothesis, we compare our model with the asymmetric RNA density of a cryo-EM tomogram at about 39 $\AA$ resolution (\cite{dent}, \cite{james}), see Figure \ref{ms2_asymmetry}. Since the density is shown in a cross sectional view, the density in the two shells cannot be seen in full. However, as expected, the density is consistent with the radial levels defined by the point arrays, consistent with our hypothesis that the mathematical model indeed describes material boundaries in this virus. Taken together, these results imply that the group $G_4$ is the group of smallest order in our classification that provides structural constraints on the capsid proteins and the genome organisation of MS2, and is therefore the symmetry group in 6D that describes the structure of this virus in projection. 

\begin{figure}[!t]
\begin{center}
\includegraphics[scale = 0.4]{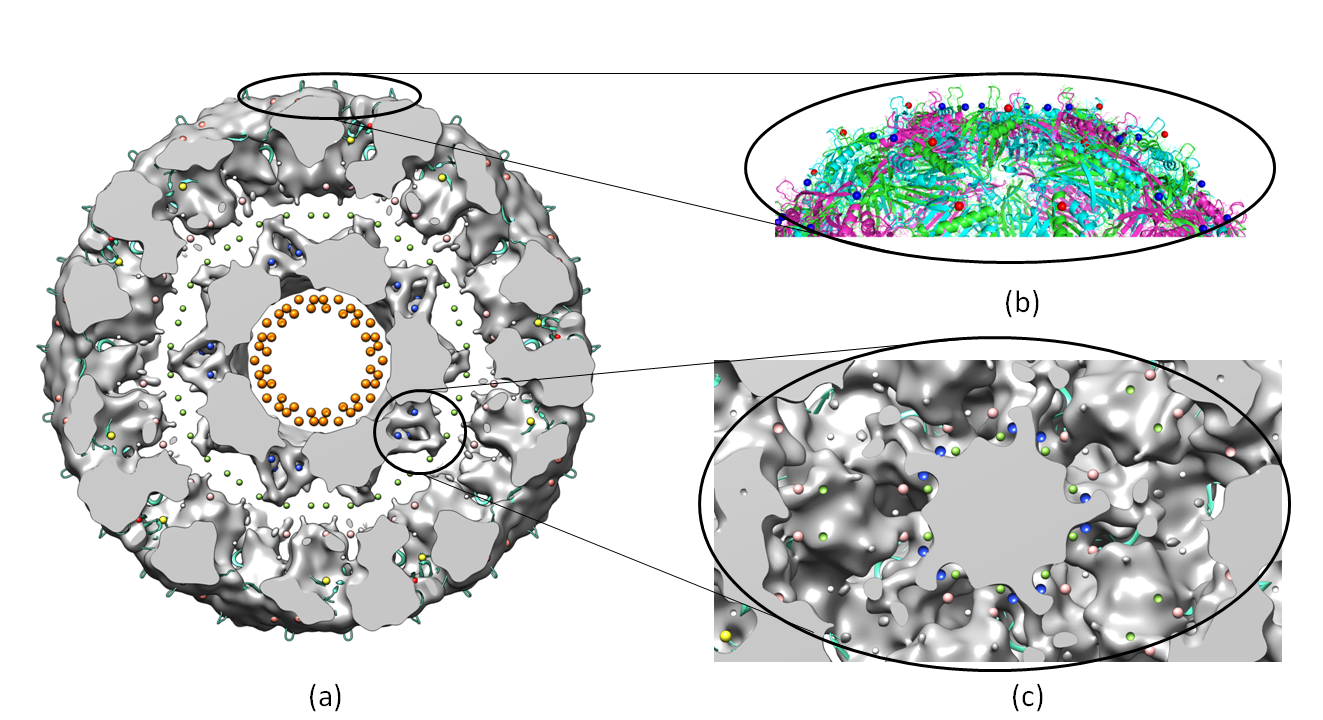}
\label{ms2}
\caption{The projected orbits of $\tilde{v} = (2,1,1,-1,0,1)$ under the group $G_4$ provides blueprints for the capsid of Bacteriophage MS2 (based on pdb file 1aq3). (a) Cross section of the capsid: the point set consists of $9$ different radial levels which encode information on the position of capsid proteins and the genomic material of the virus. (b) Close-up view of the outermost layers of the projected orbit which map around the capsid proteins. (c) Close-up view of the RNA density. The second and third innermost layers (in blue and green, respectively) map around the five-fold symmetry axes and connect the two RNA shells.}
\end{center}
\end{figure}

\begin{figure}[!t]
\begin{center}
\includegraphics[scale = 0.12]{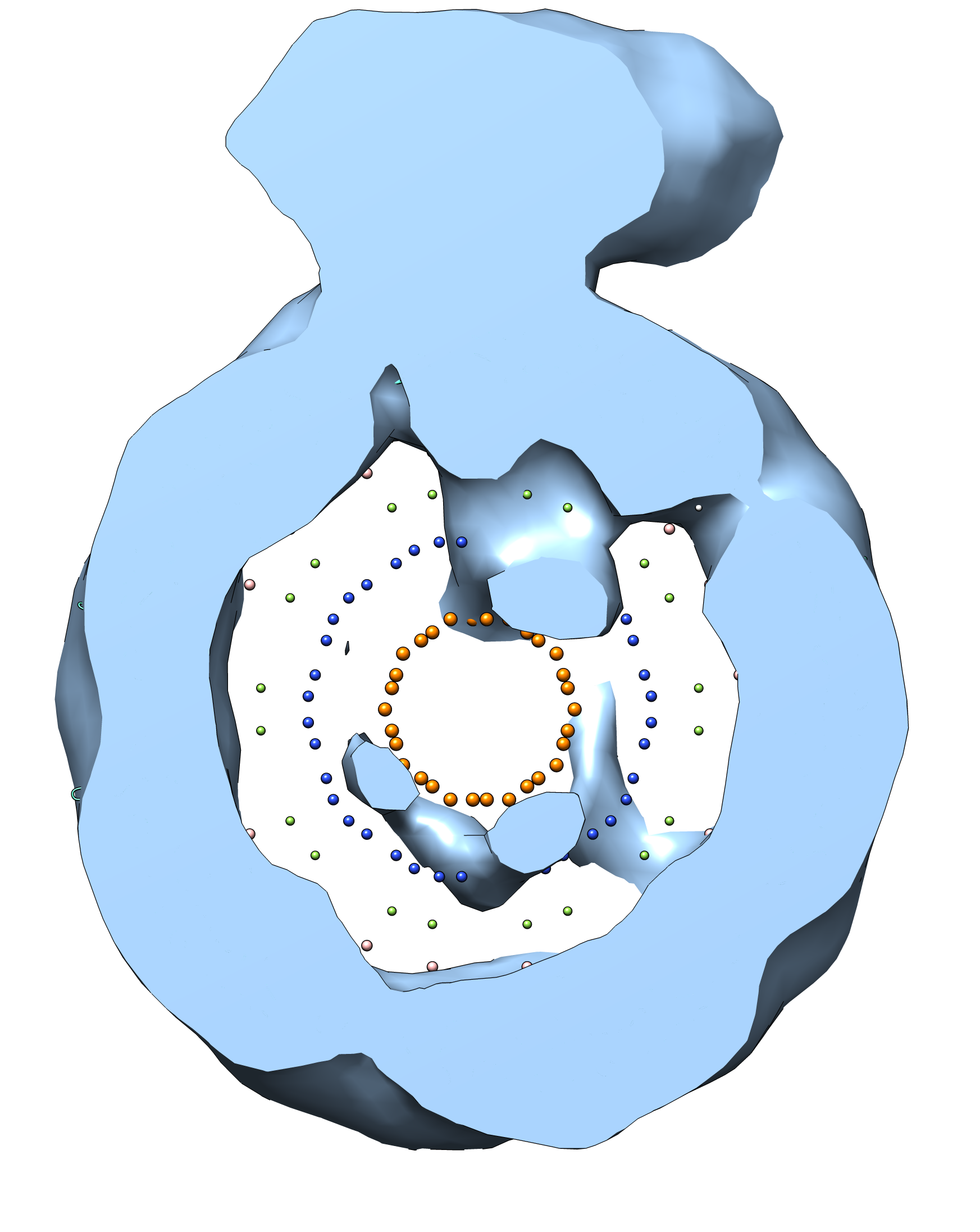}
\label{ms2_asymmetry}
\caption{A cryo-tomogram of bacteriophage MS2, adapted from \cite{dent}, superimposed on the best fit point array for bacteriophage MS2. The top of the figure shows a portion of the bacterial pilus, the natural receptor of this virus. The surface representation shows both the capsid on the exterior and the genomic RNA. The inner and outer RNA shells follow the blueprints of the array points, but realise it in an asymmetric way as expected.}
\end{center}
\end{figure}

\section{Conclusion}\label{conclusion}

The method presented here is a new way of constructing finite nested point sets with non-crystallographic symmetry from group theoretical principles. It complements previous studies, in which such point sets were constructed via affine extensions of non-crystallographic groups. The latter,  being based on the theory of infinite dimensional Kac-Moody algebras, produced infinite point sets, and a cutoff parameter had to be introduced in order to obtain finite structures. This implies that the point sets do not correspond to orbits of finite groups. The method developed in this paper, on the other hand, provides for the first time a characterisation of non-crystallographic finite multishell structures which is entirely based on the theory of finite groups in a higher dimensional space which admits a crystallographic embedding of the non-crystallographic symmetry. 

With application to viruses in mind, we discussed the case of icosahedral symmetry in detail and  provided a classification of all the subgroup chains of the hyperoctahedral group that contain a crystallographic embedding of the icosahedral group. We showed that the point sets induced by orbits of  lattice points under such groups via projection into a three dimensional invariant subspace provide a library of structural constraints on the structural organisation of viruses. In particular, we presented two case studies, Pariacoto Virus and Bacteriophage MS2, both $T = 3$ viruses, and showed that the corresponding constraint sets indicate material boundaries in these viruses. We note that also previous approaches provided good approximation for the material boundaries (\cite{janner4}, \cite{jess}, \cite{david}); however, in contrast to these approaches, we approximate here virus architecture via point arrays that are generically finite, because they stem from orbits of finite groups.  As already pointed out, the point sets display achiral icosahedral symmetry at each radial level, and hence they are invariant under reflections, i.e. the non-crystallographic Coxeter group $H_3$. However, since the point arrays only provide constraints on the structural organisation of a virus, but do not fully determine its structure, this does not imply that the virus must have full $H_3$ symmetry. Indeed, as we have discussed above, viruses may realise the blueprints given by the point arrays in an asymmetric way. This can occur, e.g., via asymmetric components in the viral capsid such as the one copy of a maturation protein that is believed to replace one of the protein dimers in the capsid shell of bacteriophage MS2 \cite{dent}, or the way in which genomic material realises the polyhedral genome organisation observed via cryo-EM. As an example for the latter, we discussed a cryo-EM tomogram of the packaged RNA of Bacteriophage MS2. However, knowledge of the possible blueprints is important, as it can be used, in combination with other techniques, in the analysis of low-resolution data of the genome organisation in viruses \cite{james}.

Viruses are known to exhibit icosahedral symmetry in their capsids due to the principle of genetic economy: the use of symmetry in the capsid organisation enables viruses to code for a small number of different types of building blocks, thus minimizing genome length, whilst building containers with a maximal number of repeats (corresponding to the order of the symmetry group) of the basic building blocks, thus achieving maximal container volume. The high level of symmetry that is observed at different radial levels, including genome organisation, may seem surprising. However, the fact that the interaction sites between genomic RNA and capsid proteins are at symmetry related positions with reference to capsid architecture may provide an explanation for the correlation between capsid architecture and genome organisation in terms of local interactions.

Moreover, our analysis of the group theoretical underpinnings of viral architecture has implications for our understanding of the dynamic properties of viruses. For example, it provides a framework for the analysis of conformational changes in viral capsid, which are structural rearrangements of the capsid proteins that are important for larger classes of viruses to become infective. Specifically, such structural transitions can be modeled with a crystallographic approach, using a generalisation of the concept of Bain strain for multi-dimensional lattices \cite{giuliana} or in the framework of the Ginzburg-Landau theory of phase transitions \cite{zappa2}. Our work opens up a new avenue for a description of such structural transitions in terms of Hamiltonians that are formulated in terms of the six-dimensional symmetry groups that induce the three-dimensional structures of the virus in projection.

More generally, previous mathematical work on affine extensions of non-crystallographic symmetries has resulted in applications beyond the area of virology for which these concepts had originally been introduced. For example, the organisation of different fullerene shells of carbon onions has been modelled with previous approaches (\cite{jannerB}, \cite{dechant1}), and we expect that our new approach should be relevant in this context as well. Moreover, a mathematical formulation of systems with non-crystallographic symmetries is a challenge in wider areas of physics, such as integrable systems, where models in terms of non-crystallographic root systems have been introduced (\cite{fring1}, \cite{fring2}); we expect that the use of projections of the higher dimensional symmetry groups, that contain non-crystallographic symmetries as crystallographic embeddings, could provide a new perspective also in this context.

\section{Appendix}

In this Appendix we provide the generators of the subgroups $G_i$, for $i =1, \ldots, 13$, which constitute the set $\mathcal{A}_{\hat{I}}$ (cf. Section \ref{ico_section}). For the computations in \texttt{GAP}, it is convenient to work with permutation instead of matrix representations. In particular, the hyperoctahedral group is isomorphic to a subgroup of the symmetric group $S_{12}$. We briefly recall this result here; for more details, we refer to \cite{baake}. 

Let $a \in \Z_2^6$ and let $\pi$ be a permutation in the symmetric group $S_6$. The set $\{ (a,\pi) : a \in \Z_2^6, \pi \in S_6 \}$ is a group under the multiplication
\begin{equation*}
(a, \pi) (b, \sigma) := (a_{\sigma} +_2 b, \pi \sigma), \qquad (a_{\sigma})_k := a_{\sigma(k)}, \; k = 1, \ldots, 6,
\end{equation*}
known as the wreath product of $\Z_2$ and $S_6$ and denoted by $\Z_2 \wr S_6$. It is isomorphic to $B_6$ via the function $T : \Z_2 \wr S_6 \rightarrow B_6$ given by
\begin{equation*}
[T(a,\pi)]_{ij} = (-1)^{a_j}\delta_{i,\pi(j)}, \quad i, j = 1, \ldots, 6.
\end{equation*}
The function $\phi : \Z_2 \wr S_6 \rightarrow S_{12}$ given by
\begin{equation*}
\phi(a, \pi)(k):= \left\{
\begin{aligned}
&\pi(k)+6a_k \quad \text{if} \; 1\leq k \leq 6 \\
&\pi(k-6)+6(1-a_{k-6}) \quad \text{if} \; 7 \leq k \leq 12
\end{aligned}
\right.
\end{equation*} 
is an injective homomorphism; the composition $\phi \circ T^{-1} : B_6 \rightarrow S_{12}$ can be used to map $B_6$ into a subgroup of $S_{12}$. In particular, the generators of $B_6$ are given by
\begin{equation*}
B_6 \simeq  \langle (1,2)(7,8), (1,2,3,4,5,6)(7,8,9,10,11,12),(6,12) \rangle,
\end{equation*}
and for the representation $\hat{\I}$ of $\I$ we have (cf. \eqref{gen}):
\begin{equation*}
\hat{\I} \simeq  \langle (1,6)(2,5)(3,9)(4,10)(7,12)(8,11), (1,5,6)(2,9,4)(7,11,12)(3,10,8) \rangle. 
\end{equation*}
 
With these results, Algorithm \ref{alg_ico} was implemented in \texttt{GAP}. The generators for the groups $G_i$, for $i = 2, \ldots, 12$, are the following: 
\begin{align*}
G_2  = & \langle  (1,6)(2,5)(3,9)(4,10)(7,12)(8,11), (1,5,6)(2,9,4)(7,11,12)(3,10,8), \\
& (1,7)(2,8)(3,9)(4,10)(5,11)(6,12)\rangle, \\
G_3  = & \langle (3,11)(4,12)(5,9)(6,10), (2,3,5,4)(6,12)(8,9,11,10),(1,2)(3,5)(7,8)(9,11) \rangle, \\
G_4 = & \langle (1,3)(2,8)(4,5,10,11)(7,9), (1,3,4,7,9,10)(2,5,12,8,11,6) \rangle, \\
G_5 = & \langle (1,8,9,7,2,3)(4,6,5)(10,12,11), (1,2)(3,5)(7,8)(9,11), (4,10) \rangle, \\
G_6 = & \langle (3,9)(6,12), (3,4,5,6)(9,10,11,12), (1,7)(6,12), (1,2,9,10,11,7,8,3,4,5)(6,12) \rangle, \\
G_7 = & \langle (1,7)(6,12), (2,8)(6,12), (1,2,9,10,11,7,8,3,4,5)(6,12), (3,4,5,12,9,10,11,6) \rangle, \\
G_8 = & \langle (1,8,9,7,2,3)(4,6,5)(10,12,11), (1,2)(3,5)(7,8)(9,11), (3,4,5,6)(9,10,11,12), (4,10) \rangle, \\
G_9 = & \langle (2,8)(6,12), (1,7)(2,5,3)(6,12)(8,11,9), (1,3,7,9)(2,12,8,6), \\
     & (1,3,2,7,9,8)(4,5,12,10,11,6) \rangle, \\
G_{10} = & \langle (1,2,6,4,3)(7,8,12,10,9), (5,11)(6,12), (1,2,6,5,3)(7,8,12,11,9), (5,12,11,6) \rangle, \\
G_{11} = & \langle (1,8,9,7,2,3), (1,7)(2,3,4)(8,9,10), (1,7)(2,3,5)(8,9,11), \\
& (2,6,3,5,4)(8,12,9,11,10), (5,11) \rangle, \\
G_{12} = & \langle (2,8)(6,12), (1,2,6,5,3)(7,8,12,11,9), (5,6)(11,12), (1,2,6,4,3)(7,8,12,10,9) \rangle. \\
\end{align*}

\paragraph{Acknowledgements.} We thank Briony Thomas, Eric C. Dykeman, Jessica Wardman and Pierre-Philippe Dechant for useful discussions, and Richard J. Bingham and James Geraets for helping in the visualisation of the cross sections of the capsids. MV thanks the York Centre for Complex Systems Analysis (YCCSA) for funding the summer project "The Art of Complexity", where part of the research for this work has been carried out. RT gratefully acknowledges a Royal Society Leverhulme Trust Senior Research fellowship (LT130088).

\bibliographystyle{unsrt}
\bibliography{orbits_viruses_bibliography}

\begin{thebibliography}{10}

\bibitem{steurer}
W.~Steurer.
\newblock Twenty years of structure research on quasicrystals. part i.
  pentagonal, octagonal, decagonal and dodecagonal quasicrystals.
\newblock {\em Zeitschrift Fur Kristallographie}, 219:391--446, 2004.

\bibitem{kroto1}
H.W. Kroto, J.R. Heath, S.C. O'Brien, R.F. Curl, and R.E. Smalley.
\newblock C60: {Buckminsterfullerene}.
\newblock {\em Nature}, 318, 1985.

\bibitem{caspar}
D.L.D. Caspar and A.~Klug.
\newblock Physical principles in the construction of regular viruses.
\newblock {\em Cold Spring Harbor Symp.Quant.Biol.}, 27:1--14, 1962.

\bibitem{reidun}
R.~Twarock.
\newblock A tiling approach to virus capsid assembly explaining a structural
  puzzle in virology.
\newblock {\em J. Theor. Biol.}, 226:477--482, 2004.

\bibitem{tang}
L.~Tang, K.N. Johnson, L.A. Ball, T.~Lin, M.~Yeager, and J.E. Johnson.
\newblock The structure of pariacoto virus reveals a dodecahedral cage of
  duplex rna.
\newblock {\em Nat. Struct. Biol.}, 8(1):77--83, 2001.

\bibitem{toropova}
K.~Toropova, G.~Basnak, R.~Twarock, P.G. Stockley, and N.A. Ranson.
\newblock The three-dimensional structure of genomic {RNA} in bacteriophage
  ms2: Implications for assembly.
\newblock {\em J. Mol. Biol.}, 375:824--836, 2008.

\bibitem{keef}
T.~Keef and R.~Twarock.
\newblock Affine extensions of the icosahedral group with applications to the
  three-dimensional organisation of simple viruses.
\newblock {\em J. Math. Biol.}, 59:287--313, 2009.

\bibitem{tom}
T.~Keef, R.~Twarock, and K.~ElSawy.
\newblock A new series of polyhedra as blueprints for viral capsids in the
  family of {Papovaviridae}.
\newblock {\em J. Theor. Biol.}, 253:808--816, 2008.

\bibitem{jess}
T.~Keef, J.~Wardman, N.A. Ranson, P.G. Stockley, and R.~Twarock.
\newblock Structural constraints on the three-dimensional geometry of simple
  viruses: case studies of a new predictive tool.
\newblock {\em Acta Cryst.}, A69:140--150, 2012.

\bibitem{carter}
R.~Carter.
\newblock {\em Lie algebras of finite and affine type}.
\newblock Cambridge University Press, 2005.

\bibitem{patera1}
J.~Patera and R.~Twarock.
\newblock Affine extensions of noncrystallographic {Coxeter} groups and
  quasicrystals.
\newblock {\em J. Phys. A: Math. Gen.}, 35:1551--1574, 2002.

\bibitem{dechant2}
P-P. Dechant, C.~Boehm, and R.~Twarock.
\newblock Novel {Kac-Moody-type} affine extensions of non-crystallographic
  {Coxeter} groups.
\newblock {\em J. Phys. A: Math. Theor.}, 45:285202, 2012.

\bibitem{dechant3}
P-P. Dechant, C.~Boehm, and R.~Twarock.
\newblock Affine extensions of non-crystallographic {Coxeter} groups induced by
  projection.
\newblock {\em J. Math. Phys.}, 54:093508, 2013.

\bibitem{janner}
A.~Janner.
\newblock Towards a classification of icosahedral viruses in terms of indexed
  polyhedra.
\newblock {\em Acta Cryst.}, A62:319--330, 2006.

\bibitem{janner1}
A.~Janner.
\newblock Form, symmetry and packing of biomolecules. {I}. {Concepts} and
  tutorial examples.
\newblock {\em Acta Cryst.}, A66:301--311, 2010.

\bibitem{janner2}
A.~Janner.
\newblock Form, symmetry and packing of biomolecules. {II}. serotypes of human
  rhinovirus.
\newblock {\em Acta Cryst.}, A66:312--326, 2010.

\bibitem{janner3}
A.~Janner.
\newblock Form, symmetry and packing of biomolecules. {III}. antigenic,
  receptor and contact binding sites in picornaviruses.
\newblock {\em Acta Cryst.}, A67:174--189, 2011.

\bibitem{janner4}
A.~Janner.
\newblock Form, symmetry and packing of biomolecules. {IV}. filled capsids of
  cowpea, tobacco, ms2 and pariacoto rna viruses.
\newblock {\em Acta Cryst.}, A67:517--520, 2011.

\bibitem{jannerB}
A.~Janner.
\newblock Alternative approaches to onion-like icosahedral fullerenes.
\newblock {\em Acta Cryst.}, A70:168--180, 2014.

\bibitem{jannerC}
A.~Janner.
\newblock From an affine extended icosahedral group towards a toolkit for viral
  architecture.
\newblock {\em Acta Cryst.}, A69:151--163, 2013.

\bibitem{david}
D.G. Salthouse, G.~Indelicato, P.~Cermelli, T.~Keef, and R.~Twarock.
\newblock Approximation of virus structure by icosahedral tilings.
\newblock {\em Acta Cryst.}, A71:410--422, 2015.

\bibitem{senechal}
M.~Senechal.
\newblock {\em Quasicrystals and geometry}.
\newblock Cambridge University Press, 1995.

\bibitem{moody}
R.V. Moody.
\newblock Model sets: a survey.
\newblock In F.~Axel, F.~D\'enoyer, and J.P. Gazeau, editors, {\em From
  Quasicrystals to More Complex Systems}. Springer-Verlag, 2000.

\bibitem{superspace}
T.~Janssen and A.~Janner.
\newblock Aperiodic crystals and superspace concepts.
\newblock {\em Acta Cryst.}, A70:617--651, 2014.

\bibitem{jannerA}
A.~Janner.
\newblock Higher-dimensional point groups in superspace crystallography.
\newblock {\em Acta Cryst.}, A64:280--283, 2008.

\bibitem{zappa}
E.~Zappa, E.C. Dykeman, and R.~Twarock.
\newblock On the subgroup structure of the hyperoctahedral group in six
  dimensions.
\newblock {\em Acta Cryst.}, A70:417--428, 2014.

\bibitem{GAP}
The GAP~Group.
\newblock {\em GAP -- Groups, Algorithms, and Programming, Version 4.7.2},
  2013.

\bibitem{zappa2}
E.~Zappa, G.~Indelicato, A.~Albano, and P.~Cermelli.
\newblock A ginzburg-landau model for the expansion of a dodecahedral viral
  capsid.
\newblock {\em International Journal of Non linear Mechanics}, 56:71--78, 2013.

\bibitem{grimm}
M.~Baake and U.~Grimm.
\newblock {\em Aperiodic Order}, volume~1.
\newblock Cambridge University Press, 2013.

\bibitem{levitov}
L.S Levitov and J.~Rhyner.
\newblock Crystallography of quasicrystals; application to icosahedral
  symmetry.
\newblock {\em Journal of Physics France}, 49:1835--1849, 1988.

\bibitem{fulton}
W.~Fulton and J.~Harris.
\newblock {\em Representation Theory: A first Course}.
\newblock Springer-Verlag, 1991.

\bibitem{holt}
D.F. Holt, B.~Eick, and E.A. O'Brien.
\newblock {\em Handbook of computational group theory}.
\newblock Chapman and Hall/CRC Press, 2005.

\bibitem{artin}
M.~Artin.
\newblock {\em Algebra}.
\newblock Prentice-Hall, 1991.

\bibitem{hall}
M.~Hall.
\newblock {\em The theory of groups}.
\newblock The Macmillan company, 1959.

\bibitem{carat}
J.~Opgenorth, W.~Plesken, and T.~Schulz.
\newblock Crystallographic algorithms and tables.
\newblock {\em Acta Cryst.}, A54:517--531, 1998.

\bibitem{humpreys}
J.F. Humphreys.
\newblock {\em A course in group theory}.
\newblock Oxford University Press, 1996.

\bibitem{viper}
M.~Carrillo-Tripp, C.~M. Shepherd, I.~A. Borelli, S.~Venkataraman, G.~Lander,
  P.~Natarajan, J.~E. Johnson, C.L. Brooks~III, and V.~S. Reddy.
\newblock Viperdb2: an enhanced and web api enabled relational database for
  structural virology.
\newblock {\em Nucleic Acids Research}, 37:D436–D442, 2009.

\bibitem{wardman}
J.~Wardman.
\newblock {\em A symmetry approach to virus architecture}.
\newblock PhD thesis, University of York, 2012.

\bibitem{dent}
K.C. Dent, R.~Thompson, A.M. Barker, J.A. Hiscox, J.N. Barr, P.G. Stockley, and
  N.A. Ranson.
\newblock The asymmetric structure of an icosahedral virus bound to its
  receptor suggests a mechanism for genome release.
\newblock {\em Structure}, 21(7), 2013.

\bibitem{james}
J.A. Geraets, E.C. Dykeman, P.G. Stockley, N.A. Ranson, and R.~Twarock.
\newblock Asymmetric genome organization in an {RNA} virus revealed via
  graph-theoretical analysis of tomographic data.
\newblock {\em PLOS Computational Biology}, 11:e1004146, 2015.

\bibitem{giuliana}
G.~Indelicato, P.~Cermelli, D.G. Salthouse, S.~Racca, G.~Zanzotto, and
  R.~Twarock.
\newblock A crystallographic approach to structural transitions in icosahedral
  viruses.
\newblock {\em J.Mathematical Biology}, 64:745--773, 2011.

\bibitem{dechant1}
P-P. Dechant, J.~Wardman, T.~Keef, and R.~Twarock.
\newblock Viruses and fullerenes- symmetry as a common thread?
\newblock {\em Acta Cryst.}, A70:162--167, 2014.

\bibitem{fring1}
A.~Fring and C.~Korff.
\newblock Affine {Toda} field theories related to {Coxeter} groups of
  noncrystallographic type.
\newblock {\em Nuclear Physics B}, 729:361--386, 2005.

\bibitem{fring2}
A.~Fring and C.~Korff.
\newblock Non-crystallographic reduction of generalized {Calogero–Moser}
  models.
\newblock {\em J. Phys. A: Math. Theor.}, 39:1115--1131, 2006.

\bibitem{baake}
M.~Baake.
\newblock Structure and representations of the hyperoctahedral group.
\newblock {\em Journal of Mathematical Physics}, 25(11):3171--3182, 1984.

\end{thebibliography}

\end{document}